\theoremstyle{plain}
\newtheorem{theorem}{Theorem}
\newtheorem{lemma}{Lemma}
\newtheorem{corollary}{Corollary}
\newtheorem{remark}{Remark}
\newcommand{\vect}[1]{\mathbf{#1}}
\def\Htran{\mbox{\tiny $\mathrm{H}$}}
\def\Ttran{\mbox{\tiny $\mathrm{T}$}}
\def\CN{\mathcal{N}_{\mathbb{C}}} 
\def\imagunit{\mathsf{j}} 
\def\sinc{\mathrm{sinc}}
\pretocmd\@bibitem{\color{black}}{}{\fail}
\pretocmd\@lbibitem{\color{black}}{}{\fail}
\apptocmd\@bibitem{\@ifundefined{keycolor#1}{}{\csname keycolor#1\endcsname}}{}{\fail}
\apptocmd\@lbibitem{\@ifundefined{keycolor#2}{}{\csname keycolor#2\endcsname}}{}{\fail}
\newcommand\citecolor[2][blue]{\@namedef{keycolor#2}{\color{#1}}}
\begin{document}

\title{Near-Field Beamfocusing, Localization, and Channel Estimation With Modular Linear Arrays}

\author{Alva~Kosasih, \"Ozlem Tu\u{g}fe Demir, and~Emil~Bj{\"o}rnson,~\IEEEmembership{Fellow,~IEEE}  
\thanks{Parts of this paper were presented at the 2024 IEEE Asilomar Conference on Signals, Systems, and Computers \cite{Kosasih2024_Asilomar}. \\
\quad A. Kosasih was with KTH Royal Institute and now is with Nokia Standards, Espoo, Finland. Email: alva.kosasih@nokia.com. 
\"O. T. Demir was with TOBB University of Economics and Technology and now is with Bilkent University, Ankara, Turkiye. Email: ozlemtugfedemir@bilkent.edu.tr. 
E. Bj{\"o}rnson is with the Division of Communication Systems,  KTH Royal Institute of Technology, Stockholm, Sweden. E-mail: emilbjo@kth.se. \\
\quad This work was supported by the FFL18-0277 grant from the Swedish Foundation for Strategic Research and the Grant 2022-04222 from the Swedish Research Council.}}

\maketitle

\begin{abstract}
This paper investigates how near-field beamfocusing can be achieved using a modular linear array (MLA), composed of multiple widely spaced uniform linear arrays (ULAs). The MLA architecture extends the aperture length of a standard ULA without adding additional antennas, thereby enabling near-field beamfocusing without increasing processing complexity. Unlike conventional far-field beamforming, near-field beamfocusing enables simultaneous data transmission to multiple users at different distances in the same angular interval, offering significant multiplexing gains. 
We present a detailed mathematical analysis of the beamwidth and beamdepth achievable with the MLA and show that by appropriately selecting the number of antennas in each constituent ULA, ideal near-field beamfocusing can be realized. In addition, we propose a computationally efficient localization method that fuses estimates from each ULA, enabling efficient parametric channel estimation. Simulation results confirm the accuracy of the analytical expressions and that MLAs achieve near-field beamfocusing with a limited number of antennas, making them a promising solution for next-generation wireless systems.
\end{abstract}

\begin{IEEEkeywords}
Beamfocusing, beamwidth, beamdepth, channel estimation, modular linear array, localization, near field.
\end{IEEEkeywords}

\section{Introduction}
\label{S_Intro}

The commercialization of massive multiple-input multiple-output (mMIMO) technology has been a cornerstone of $5$G networks, enabling substantial improvements in spectral efficiency and energy efficiency compared to $4$G \cite{massivemimobook}. As $5$G deployments continue to expand, covering $45\%$ of the world’s population by the end of 2023 \cite{Ericsson2024}, current research focuses on developing better technology to achieve the ambitious goals of $6$G and beyond. Next-generation base stations (BSs) must support even higher cell throughput and new services such as wireless sensing and artificial intelligence (AI) at the edge \cite{2020_Saad_IEEENet,2024_Pourkabirian_Commag}. The BSs will be equipped with MIMO technology that incorporates significantly larger arrays in the upper mid-band, which might enable significantly higher throughput through greatly enhanced spatial multiplexing capabilities \cite{2024_Bjornson_Arxiv}. However, to fully exploit these benefits in practice, it is desirable to utilize the near-field spherical wave property to obtain richer channel features that make closely spaced users easily separable in the joint distance/angular domain  \cite{2023_Ramezani_Bits}. 

\subsection{Related Works}

Three regions classically define the electromagnetic radiation patterns of an antenna array with respect to the propagation distance: the reactive near field, radiative near field, and far field \cite{balanis2016antenna}. We focus on the radiative near field, where amplitude variations across the antennas in the array are negligible, and only phase variations are considered. For simplicity, we refer to the radiative near field as the \emph{near field}.
Unlike conventional far-field beamforming, which focuses signals on a far-away point, near-field beamforming acts like a lens, concentrating signals on a specific location, known as finite-depth beamforming/beamfocusing. This is accomplished using a matched filter (MF) based on the channel coefficient of each antenna in the array.

Near-field beamfocusing makes spatial multiplexing more practically useful, particularly in line-of-sight and sparse multipath environments. The reason is that the array can separate multiple users simultaneously by distinguishing them in the angular and distance domains, instead of only the angular domain, as with traditional far-field beamforming used in legacy networks. The beamfocusing feature exists when the propagation distance is less than the Fraunhofer distance\footnote{{The study in \cite{2025_Sun_CMag} indicates that the classical Fraunhofer distance still provides a reasonably accurate near-field/far-field boundary, with relatively low channel capacity estimation error compared to most alternatives.
}} $2D^2/\lambda$ \cite{2021_Björnson_Asilomar}, where $D$ and $\lambda$ denote the aperture length of the array and the wavelength, respectively. In practice, given a coverage range of $d_{\max}$ and a wavelength of $\lambda$, we can calculate the aperture length $D= \sqrt{d_{\max} \lambda/2}$ required to enable beamfocusing. However, filling this aperture with conventional half-wavelength-spaced antennas is challenging: the array would be physically large, require hundreds or thousands of antennas, and the computational complexity would be excessive.

A natural approach to mitigate these issues is to decrease the number of antennas in the array while maintaining its aperture area. There are two main options to achieve this:
\begin{enumerate}
    \item \textbf{Sparse uniform arrays}: Increasing the antenna spacing in the array beyond half-a-wavelength while maintaining a uniform structure, leading to a linear sparse array (LSA) \cite{2023_Yang_Arxiv,2025_Li_WCMag,Wang2023_GCWKS,2024_Chen_JSTSP}.
    {\cite{2023_Yang_Arxiv} and \cite{2025_Li_WCMag} provide a comprehensive study of how sparse arrays can benefit integrated sensing and communication (ISAC) in the direction of 6G. This includes grating lobe suppression, beam codebook design, and array geometry optimization. In \cite{Wang2023_GCWKS}, the beam-pattern properties of sparse and collocated arrays and the distribution of inter-user angular separations were analyzed, showing their impact on inter-user interference (IUI); the study found that sparse arrays are less likely to suffer severe IUI than collocated arrays. \cite{2024_Chen_JSTSP} focuses on maximizing the sum rate in near-field communications by minimizing grating lobes and IUI.}   
    \item \textbf{Modular arrays}: Creating sub-arrays with the same number of antennas and half-wavelength antenna spacing, while extending the spacing between these sub-arrays to achieve the desired aperture area  \cite{2021_Jeon_Commag}.
    {They propose modular arrays for $6$G, allowing sub-arrays with different shapes and sizes. The paper argues that the required hardware should be feasible by the expected $6$G timeline around 2030 and presents system-level simulations showing the performance gains of modular massive MIMO (mmMIMO).}
\end{enumerate}
Additionally, a combination of these approaches can be implemented, where each sub-array is designed as a sparse array. Examples include coprime arrays \cite{2023_Wang_Arxiv} and extended coprime arrays \cite{2024_Zhou_Arxiv}. The traditional issue with sparse uniform arrays is that they introduce grating lobes \cite{bjornson2024introduction}, i.e., side lobes equally strong as the main lobe. This is undesirable since it causes strong inter-user interference and localization ambiguity. In contrast, modular arrays only create low-gain side-lobes \cite{2024_Li_TWC} since the sub-arrays create ``clean'' beam-directivity.

A typical 5G BS uses a compact array with tens of antennas (e.g., 32 or 64 antennas).
A modular linear array (MLA) can be created by deploying multiple such 5G-like arrays on the same rooftop. This makes synchronization and coordinated processing of these sub-arrays feasible, as they are separated by only a few meters (or less), well within the range of cable installations, and can have a joint local oscillator and baseband unit. Hence, MLAs are easier to deploy and operate than cell-free mMIMO networks, where the sub-arrays are distributed over the coverage area and require local processing capabilities and over-the-air synchronization \cite{cell-free-book}.

Only a few previous studies have been performed of near-field communications using modular arrays \cite{2024_Li_TWC,2024_Meng_WCL,2022_Li_CL,2025_Li_TC,2025_Meng_TC}. \cite{2024_Li_TWC} has proposed a multi-user scheduling method to maximize the sum rate when communicating using an MLA in the near field. In \cite{2024_Meng_WCL}, modular uniform and non-uniform arrays are considered, with a focus on deriving the closed-form Cramér-Rao bound for range and angle estimation in a bistatic near-field sensing system. In \cite{2022_Li_CL},  the performance of modular arrays is analyzed for different array geometries and user locations. The study focuses on analyzing the signal-to-noise ratio (SNR) in modular arrays, where the SNR is analytically derived to increase proportionally with the number of sub-arrays and/or the number of antennas within each sub-array. Codebook design and hybrid beamforming for modular near-field arrays were proposed in \cite{2025_Li_TC} and \cite{2025_Meng_TC}, respectively.

Recently, in terahertz (THz) communications, the concept of modular arrays has also recently emerged \cite{2018_Song_TWC,2022_Yan_TWC,2024_Yang_TWC}. These modular arrays were referred to as widely spaced sub-arrays, first introduced in \cite{2018_Song_TWC},  where it was identified that two types of spatial multiplexing, inter-path and intra-path, can be jointly exploited. Inter-path multiplexing refers to signals observed from individual sub-arrays, which exhibit planar-like wavefronts, while intra-path multiplexing involves signals seen from the entire array, characterized by spherical-like wavefronts. By exploiting both multiplexing types,  an advanced hybrid beamforming architecture tailored for this scenario was developed to provide high spectral efficiency \cite{2022_Yan_TWC}. Further, \cite{2024_Yang_TWC} examined the localization capabilities of such systems, deriving theoretical bounds such as the Cramér-Rao bound for near-field positioning in widely spaced sub-arrays.

To the best of our knowledge, the near-field beamfocusing pattern in MLAs has not been analytically studied in prior work. We note that the beamfocusing pattern in MLA depends on its configuration and  is different from the one in conventional uniform arrays.
Therefore, it is essential  to identify the most efficient MLA configuration (e.g., number of sub-arrays and antennas per sub-array) for achieving effective beamfocusing. Furthermore, given that higher processing complexity and cost arise from handling a larger number of antennas, a critical question---the main motivation of this paper---is whether it is possible to achieve effective beamfocusing with a relatively few antennas.

\subsection{Contributions}

In this paper, we analyze mathematically and numerically how the geometry of MLAs influences their beampattern in near-field communications. We consider key design factors such as the number of antennas in each ULA, the number of ULAs, and the spacing between the ULAs. Based on this analysis, we identify the most efficient geometrical structure for achieving beamfocusing effects with a minimal number of antennas. The main objective of this paper is to show that it is possible to achieve beamfocusing with relatively few antennas. This contrasts with the recent study \cite{2024_Li_TWC}, which focuses mainly on the grating lobes. Additionally, we propose low-complexity localization and channel estimation methods for MLAs, and demonstrate that they achieve near-optimal performance with significantly reduced complexity. The localization approach is based on triangulation, which is known to enable low complexity algorithms, as it operates only in the angular domain rather than jointly in the angular and distance domains \cite{2025_Haghshenas_Arxiv}. {This aligns with the concept of inter- and intra-path multiplexing considered in \cite{2018_Song_TWC,2022_Yan_TWC}.} However, to the best of our knowledge, triangulation has not been applied to develop efficient localization algorithms for modular array systems. More specifically, the novelty of the proposed algorithm lies in the use of the actual modular array geometry, where sub-arrays are naturally in the far-field of the BS. Therefore, far-field angle estimation can be performed in each sub-array. The far-field angles are obtained through array signal processing, specifically, using the MUltiple SIgnal Classification (MUSIC) algorithm. The estimated distances are then fused through a least-squares triangulation step. Thus, the novelty of the approach stems from the overall framework for performing near-field localization, which combines far-field angle estimates with the modular array structure, rather than from the use of the triangulation method itself.

The main contributions are summarized as follows.
\begin{itemize}
    \item We analyze the beampattern achieved by MLAs and develop related analytical expressions, including closed-form expressions for the beamwidth and beamdepth. The analysis begins with two ULAs and then generalizes to an arbitrary number $L$ of ULAs, where $L \geq 2$. The use of MLAs with two sub-arrays shows great potential due to their simple structure, while enabling beamfocusing with significantly fewer antennas compared to the conventional half-wavelength spacing ULA with the same aperture length. 

    \item Using the developed analytical expressions, we calculate the number of antennas required to achieve clean beamfocusing without gain fluctuations inside of the beamfocusing region. We also propose an algorithm to determine the necessary number of sub-arrays to achieve such near-field beamfocusing for a given aperture length. Our novel closed-form expressions for beamwidth and beamdepth enable these calculations and ensure that the algorithm can be executed efficiently.
    
    \item  We develop an efficient user localization method for MLAs by estimating one angle per ULA and combining them to accurately determine the user's location. Consequently, we only need to perform a grid search over the angular dimension, rather than the angular and distance dimensions, as typically required in near-field localization \cite{2023_Chen_Arxiv,2024_Kosasih_Asilomar,2024_Ramezani_Arxiv}. This leads to a significant reduction in complexity while maintaining good localization performance, as will be demonstrated in our simulation results. Additionally, we use the proposed localization method for channel estimation and demonstrate that our approach achieves SE that closely matches that of the one with perfect channel state information.
\end{itemize}
We note that the MLA considered in this paper reduces to an LSA when each ULA has a single antenna and the spacing between the ULAs exceeds $\lambda/2$. The beampattern analysis for this LSA configuration is detailed in \cite{2024_Zhou_Arxiv} and textbooks such as \cite{bjornson2024introduction}. Furthermore, the MLA reduces to a standard ULA (i.e., with half-wavelength spacing) when the spacing between the closest antennas of adjacent ULAs matches the inter-antenna spacing within each ULA. Although beamfocusing pattern analysis for a ULA has not been reported in the prior literature, it can be derived as a special case of beampattern analysis for uniform rectangular arrays, outlined in our recent work \cite{2023_TWC_BD}. We will give this derivation in Lemma~\ref{Fresnel_Approx_ULA}. Hence, we extend the results from \cite{2023_TWC_BD, 2024_Zhou_Arxiv}. Compared to the conference version in \cite{Kosasih2024_Asilomar}, this paper includes three main differences:
\begin{itemize}
\item The conference version considers only the case of $L = 2$ ULAs, whereas this paper extends the analysis to an arbitrary number of ULAs.
\item The conference version does not provide an algorithm for determining the number of arrays required to achieve a desired beamfocusing pattern.
\item The conference version does not address channel estimation or localization for the MLA.
\end{itemize}
These additions lead to substantial improvements in the journal version. In particular, we expand the analysis, introduce new algorithms, and broaden the scope to connect beamfocusing patterns with practical performance metrics in wireless communications, such as spectral efficiency.

\subsection{Outline}

The remainder of the paper is organized as follows. In Section~\ref{Sect_Prelim}, we provide preliminaries for the array gain of an antenna array and beamfocusing using a ULA. In Section~\ref{S_NF_TwoArr}, we analyze the beampattern, including beamwidth and beamdepth, of MLAs with two ULAs. In Section~\ref{S_NF_MultiArr}, we extend the beampattern analysis for MLAs with an arbitrary number of ULAs. In Section~\ref{S_loc_ChEst}, we discuss how localization and channel estimation can be performed by exploiting the structure of MLAs.  This paper is concluded in Section~\ref{Sect_conclude}.
\section{Preliminaries}
\label{Sect_Prelim}

In this section, we elaborate on how to achieve beamfocusing with a ULA receiver and a single antenna user transmitter.\footnote{We take the viewpoint of a receiver array with an isotropic transmitter for ease of presentation, but the same results apply in the reciprocal setup with a transmitting array and isotropic receiver.} The methodology presented in this section will be utilized in later sections when we analyze scenarios with multiple arrays.

We consider a free-space propagation scenario with an isotropic transmit antenna located at $\left(x_t,y_t,z \right)$ and a receiver array deployed along the $x$-axis with its center in the origin. We focus on a broadside transmission where $x_t=y_t=0$, to keep the beamfocusing explanation clear and simple, but non-broadside transmissions will be considered in later sections. 

\subsection{Computing the Array Gain of a ULA}

We consider a ULA with $N$ aperture antennas indexed by $n \in \{1, \dots, N \}$. Antenna $n$ is centered at the point $(\bar{x}_n,0,0)$ given by:
\begin{align}
    \bar{x}_n = \left(n-\frac{N+1}{2}\right)\delta,
\end{align}
where $\delta$ is the spacing between adjacent antennas. Antenna $n$ covers the area $\delta \times \delta$ defined as
\begin{align}
    \mathcal{A}_n = \left\{ (x,y,0) : |x-\bar{x}_n|\leq \frac{\delta}{2}, |y|\leq \frac{\delta}{2} \right\}.
\end{align}
The aperture length of the ULA is defined as $D_{\rm array} =  N \delta $.
The spherical phase variations of the impinging wave across the array are negligible (less than $\pi/8$) when the propagation distance exceeds the Fraunhofer array distance, defined as $d_{\rm FA} =  2D_{\rm array}^2 /\lambda$ \cite{2021_Björnson_Asilomar}.  This corresponds to far-field communication, where the spherical wavefront curvature can be accurately approximated as planar. In contrast, when the propagation distance is shorter than $d_{\rm FA}$, the spherical wavefront curvature becomes noticeable, characterizing the radiative near-field communications. In this paper, we need a channel model that is accurate in both the far and near fields.

{If the transmitter sends a signal polarized in the $y$-dimension,   the electric field at the point $(x,y,0)$ of the receiver aperture can be derived from the Green function as  \cite[App. A]{2020_Björnson_JCommSoc}
\begin{equation}\label{eq_II_ElectField}
 E (x,y) =  \frac{E_0}{\sqrt{4 \pi}} \frac{\sqrt{z \left( x ^2+z^2\right)}} 
 { r ^{5/4}}   e^{-\imagunit\frac{2\pi}{\lambda} \sqrt{r}},
\end{equation}
where $r =  x^2 + y^2 + z^2$ is the squared Euclidean distance between the transmitter and the considered point, and $E_0$ is the electric intensity of the transmitted signal. The expression in \eqref{eq_II_ElectField} takes into account how the effective area, polarization losses, and path loss can vary over the aperture in the radiative near-field, where the detailed descriptions can be found in \cite[Eq. 69]{2020_Björnson_JCommSoc}}. 

The power received at antenna $n$ is given by $E_0^2 |h_n|^2 / \eta$, 
where $E_0^2/\eta$ is the power of the transmitted signal, $\eta$ is the impedance of the free space, 
\begin{equation} \label{eq:channel-n}
h_n = \frac{1}{E_0\sqrt{A}}  \int_{\mathcal{A}_n} E(x,y) dx dy
\end{equation}
is the dimensionless complex-valued channel response \cite[Eq.~(64)]{2020_Björnson_JCommSoc}, and $A=\delta^2$ is the area of each receive antenna.

If $s \in \mathbb{C}$ denotes the information symbol transmitted with power $E_0^2/\eta$, the received signal at antenna $n$ becomes:
\begin{equation}
\chi_{n} = h_{n} s + v_{n}, \quad n=1,\ldots,N,
\end{equation}
where $v_{n} \sim \CN(0,\sigma^2)$ is independent complex Gaussian receiver noise. 
Suppose that linear receive combining is used to detect $s$ from $\chi_{1},\ldots,\chi_{N}$.
Multiplying each received signal by a weight $w_n$ and summing them up, we obtain $\sum_{n=1}^Nw_n\chi_n$. 
Without loss of generality, the weights can be normalized so that $\sum_{n=1}^N|w_n|^2=1$. The SNR of the combined signal is then given as
\begin{align}
    \mathrm{SNR} = \frac{E_0^2}{\eta\sigma^2}\left| \sum_{n=1}^Nw_nh_n\right|^2 = \frac{ \left| \sum_{n=1}^N \int_{\mathcal{A}_n} w_n E(x,y) dx dy\right|^2 }{A \eta\sigma^2}. \label{eq:SNR-not-maximized}
\end{align}
The SNR can be maximized using MF, where $w_{n} = h_{n}^*/\sqrt{\sum_{j=1}^N|h_{j}|^2}$ is the weight selected for antenna $n$. The resulting maximum SNR is\footnote{Note that the MF weights are normalized as 
$\sum_{n=1}^{N} |w_n|^2 = 1$, and this normalization does not 
affect the resulting SNR expression.
}
\begin{equation} \label{eq:SNR-formula}
\overline{\mathrm{SNR}}= 
 \frac{E_0^2}{\eta \sigma^2} \sum_{n=1}^{N}  \left| h_n \right|^2 = \frac{\sum_{n=1}^{{N}} \left| \int_{\mathcal{A}_n} E(x,y) dx dy \right|^2}{A\eta\sigma^2}.
\end{equation}

The weight $w_n$ appears within the integral in \eqref{eq:SNR-not-maximized}. Hence, we can treat the receiving combining as multiplying the electric field by a piecewise constant function $w(x,y)$ that takes the constant value $w_n$ over the antenna $n$. If the antennas are sufficiently small, this combining function becomes nearly continuous, and we can optimize the entire function. The corresponding continuous MF receiver is given by 
\begin{equation} \label{eq:CMF}
    w(x,y)= E^*(x,y) \Big/\sqrt{\sum_{n=1}^N \frac{1}{A} \int_{\mathcal{A}_n}|E(x,y)|^2 dx dy},
\end{equation}
where the normalization ensures that the total average weight over the antennas is one, i.e., $\sum_{n=1}^N \frac{1}{A}\int_{\mathcal{A}_n}|w(x,y)|^2{dxdy}=1 $.
The resulting SNR with continuous MF is:
\begin{align}
\overline{\mathrm{SNR}}&= 
 \frac{E_0^2}{\eta \sigma^2} \sum_{n=1}^{N}\left|h_n\right|^2 \approx  \frac{ \left| \sum_{n=1}^N \int_{\mathcal{A}_n} w(x,y) E(x,y) dx dy\right|^2 }{A \eta\sigma^2} \nonumber \\
 & = 
 \frac{1}{\eta\sigma^2} \sum_{n=1}^N \int_{\mathcal{A}_n}\left|E(x,y)\right|^2dxdy. \label{eq:SNR-continuous}
\end{align}

\begin{remark}
The apparent difference in the placement of the absolute value between 
\eqref{eq:SNR-formula} and \eqref{eq:SNR-continuous} originates from the transition from the discrete MF receiver 
to its continuous-aperture approximation. In \eqref{eq:SNR-formula}, the coherent combining 
is carried out over $N$ discrete antennas, leading to a summation of 
per-antenna integrals whose magnitudes are squared individually. 
In \eqref{eq:SNR-continuous}, the continuous MF combines the field across the entire aperture 
before squaring the magnitude, which is why the absolute value applies 
to the complete integral expression. Both formulations are consistent 
representations of MF combining.
\end{remark}

To quantify the array gain actually achieved using multiple antennas, we consider the ideal reference case with continuous MF and all antennas receiving the same power as the antenna at the origin. In this reference case, the SNR is given as
\begin{align}
  \mathrm{SNR}^{\rm ref} = \frac{N}{\eta\sigma^2}\int_{\mathcal{A}}\left|E(x,y)\right|^2dxdy \label{eq:SNR-cont},
\end{align}
where  $\mathcal{A} = \left\{ (x,y,0) : |x|\leq \frac{\delta}{2}, |y|\leq \frac{\delta}{2}\right\}$ denotes the area covered by a single antenna located at the origin.

We can then write the normalized array gain as {\cite{2021_Björnson_Asilomar}}:
\begin{equation}\label{eq_II_NormalizedArrayGain}
   G_{\rm array} = \frac{ \sum_{n=1}^{{N}} \left| \int_{\mathcal{A}_n} E(x,y) dx dy \right|^2}{N A \int_{\mathcal{A}} \left|E(x,y)\right|^2 dx dy },
\end{equation}
obtained by dividing the SNR achieved by conventional MF in \eqref{eq:SNR-formula} by the SNR achieved in the reference continuous MF case in \eqref{eq:SNR-cont}. Here, $\mathcal{A}_n$ denotes the set of points corresponding to the physical aperture area of antenna $n$. The value of $G_{\rm array}$ ranges from $0$ to $1$.

Whenever the distance to the transmitter is larger than $2D_{\rm array}$, the electric field expression in \eqref{eq_II_ElectField} can be simplified using the Fresnel approximation as \cite{2021_Björnson_Asilomar}
\begin{equation}\label{eq:ElectricalField_Fresnel_approx}
    E(x,y) \approx \frac{E_0}{\sqrt{4 \pi} z} e^{-\imagunit \frac{2\pi}{\lambda}\left(z + \frac{x^2}{2z} + \frac{y^2}{2z} \right)}.
\end{equation}
This expression is obtained by omitting amplitude variations over the aperture of the array and making a first-order Taylor expansion of the Euclidean distance between the transmitter and the array in the phase expression. In the remainder of the paper we will utilize the Fresnel approximation in \eqref{eq:ElectricalField_Fresnel_approx} for analytical derivations. 
The tightness of this approximation was recently demonstrated in \cite{2023_TWC_BD}. 

\subsection{Beamfocusing Using a ULA}
\label{subsec:beamfocusing-ULA}

MF focuses the reception/transmission on a desired focal point where the maximum achievable array gain should be achieved. The array gain will also be significant at other nearby points. 
In traditional far-field beamforming, the maximum array gain is achieved at any other point in the same direction, regardless of the distance.
Beamfocusing refers to the alternative near-field phenomenon in which the array gain is only large in a limited distance range around the focal point.
Next, we will describe how this phenomenon can be quantified.

Suppose that the MF is configured for reception from a focal point $(0,0,F)$ while the transmitter is located at $(0, 0, z)$. Using the Fresnel approximation, the MF weight designed for antenna $n$ is
\begin{align}
    w_n= \frac{\left(\int_{\mathcal{A}_n} e^{-\imagunit \frac{2\pi}{\lambda}\left(\frac{x^2}{2F}+\frac{y^2}{2F}\right)} d x d y\right)^*}{\sqrt{\sum_{j=1}^N  \left\vert \int_{\mathcal{A}_j}e^{-\imagunit \frac{2\pi}{\lambda}\left(\frac{x^2}{2F}+\frac{y^2}{2F}\right)} d x d y     \right \vert^2   }}.
    \label{eq:combining-coefficient}
\end{align}
The normalized array gain expression can then be obtained following the same principles as in \eqref{eq_II_NormalizedArrayGain}. Hence, we divide the SNR in \eqref{eq:SNR-not-maximized} by the reference SNR in \eqref{eq:SNR-cont} to obtain
\begin{align}\label{eq:MF_near}
&{G}_{\rm ULA}  = \frac{\left\vert  \sum_{n=1}^N  \int_{\mathcal{A}_n } w_n E(x,y) dx dy   \right\vert^2 }{N A \int_{\mathcal{A}} \left|E(x,y)\right|^2 dx dy }
\nonumber\\
& \approx \hat{G}_{\rm ULA} = \frac{1}{NA^2}\left\vert \sum_{n=1}^N  \int_{\mathcal{A}_n}  w(x,y)   e^{-\imagunit \frac{2\pi}{\lambda}\left(\frac{x^2}{2z}+\frac{y^2}{2z}\right)} dx dy \right\vert^2
\nonumber\\
&=  \frac{1}{(NA)^2}  
\left\vert \sum_{n=1}^N \int_{\mathcal{A}_n} e^{+\imagunit\frac{2\pi}{\lambda}\left(\frac{x^2}{2F}+\frac{y^2}{2F}\right)}e^{-\imagunit\frac{2\pi}{\lambda}\left(\frac{x^2}{2z}+\frac{y^2}{2z}\right)}   dx dy\right \vert^2,
\end{align}
where the corresponding continuous MF receiver in \eqref{eq:CMF} has the combining function
\begin{equation}
    w(x,y) = e^{\imagunit \frac{2\pi}{\lambda}\left(\frac{x^2}{2F}+\frac{y^2}{2F}\right)}/\sqrt{N}
\end{equation}
since $\sum_{n=1}^N \frac{1}{A} \int_{\mathcal{A}_n}|E(x,y)|^2 dx dy=N A /A = N$.
The approximation in \eqref{eq:MF_near} follows from replacing MF with continuous MF (i.e., assuming small antennas) and using the Fresnel approximation in \eqref{eq:ElectricalField_Fresnel_approx}. We can derive a closed-form expression for the normalized array gain in \eqref{eq:MF_near}, as stated in the following lemma. 
\begin{lemma}\label{Fresnel_Approx_ULA}
When the transmitter is located at $(0,0,z)$ and the MF is focused on $(0, 0, F)$, the Fresnel approximation-based normalized array gain for a ULA becomes
\begin{equation}\label{eq_III_ApproxGainRect}
\hat{G}_{\rm ULA} =
\frac{\left( {{C}^{2}}\left( \sqrt{ a } \right)+{{S}^{2}}\left( \sqrt{ a } \right) \right) \left( {{C}^{2}}\left( \sqrt{ a } N \right)+{{S}^{2}}\left( \sqrt{ a } N \right) \right)}{( N a )^{2}},
\end{equation}
 where $C(\cdot )$ and $S(\cdot )$ are the Fresnel integral functions \cite{1956_Polk_TAP}, $ a = \frac{\lambda}{8{z}_{\rm eff}}$, and $z_{\rm eff} = \frac{Fz}{|F-z|} $.
\end{lemma}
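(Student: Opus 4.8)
The plan is to start from the continuous-aperture expression for $\hat{G}_{\rm ULA}$ in \eqref{eq:MF_near} and reduce its double integral to a product of one-dimensional Fresnel integrals. The first step is to merge the two quadratic phase factors into one: their product carries the exponent $\imagunit\frac{2\pi}{\lambda}\frac{x^2+y^2}{2}\big(\frac{1}{F}-\frac{1}{z}\big)$, and since $\big|\frac{1}{F}-\frac{1}{z}\big| = \frac{|F-z|}{Fz} = \frac{1}{z_{\rm eff}}$, the combined integrand is $e^{\pm\imagunit\frac{\pi}{\lambda z_{\rm eff}}(x^2+y^2)}$. The sign is irrelevant because the outer modulus is squared, so a sign flip merely conjugates the integral. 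Next I would observe that the cells $\mathcal{A}_n$ tile the full aperture rectangle $[-N\delta/2,N\delta/2]\times[-\delta/2,\delta/2]$ without gaps or overlaps, so the sum of cell integrals collapses to a single integral over this rectangle. Because the exponent is additively separable in $x$ and $y$, the integral factorizes as $I_x\,I_y$, with $I_x$ over $[-N\delta/2,N\delta/2]$ and $I_y$ over $[-\delta/2,\delta/2]$.

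The second step is to bring each factor into Fresnel form. For $I_y$ I would substitute $t = y\sqrt{2/(\lambda z_{\rm eff})}$, which turns the exponent into $\frac{\pi}{2}t^2$ and maps the limits to $\pm t_y$ with $t_y = \frac{\delta}{2}\sqrt{2/(\lambda z_{\rm eff})}$. Using that the integrand is even and recalling the definitions of the Fresnel integrals $C(\cdot)$ and $S(\cdot)$, this gives $I_y = 2\sqrt{\lambda z_{\rm eff}/2}\,\big(C(t_y)+\imagunit S(t_y)\big)$ and hence $|I_y|^2 = 2\lambda z_{\rm eff}\big(C^2(t_y)+S^2(t_y)\big)$. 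The identical manipulation applied to $I_x$ yields the same prefactor with the upper limit scaled by $N$, i.e. $t_x = N t_y$, so that $|I_x|^2 = 2\lambda z_{\rm eff}\big(C^2(Nt_y)+S^2(Nt_y)\big)$.

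Finally I would assemble $\hat{G}_{\rm ULA} = |I_x|^2|I_y|^2/(NA)^2$ with $A=\delta^2$, producing the prefactor $4\lambda^2 z_{\rm eff}^2/(N^2\delta^4)$ multiplying the two Fresnel brackets. The last move is to impose the half-wavelength spacing $\delta=\lambda/2$ of the reference ULA: then $t_y^2 = \delta^2/(2\lambda z_{\rm eff}) = \lambda/(8 z_{\rm eff}) = a$, so the upper limits become exactly $\sqrt{a}$ and $N\sqrt{a}$, while the prefactor collapses through $1/a^2 = 64 z_{\rm eff}^2/\lambda^2$ to $1/(Na)^2$, reproducing \eqref{eq_III_ApproxGainRect}. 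I expect the main obstacle to be the bookkeeping of the constant factor: one must carry the Jacobian $\sqrt{\lambda z_{\rm eff}/2}$ from each substitution alongside the $(NA)^2$ normalization and verify that, after inserting $\delta=\lambda/2$, every occurrence of $\lambda$, $z_{\rm eff}$, and $\delta$ cancels except for the clean $1/(Na)^2$. It is precisely this cancellation that is enabled by the half-wavelength assumption and that singles out $a=\lambda/(8z_{\rm eff})$ as the correct argument; the Fresnel reduction itself is routine once the integrand has been separated.
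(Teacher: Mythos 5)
Your proof is correct and follows essentially the same route as the paper's Appendix~A: merge the two quadratic phases into a single chirp with curvature $1/z_{\rm eff}=|F-z|/(Fz)$, collapse the cell sum to one integral over the full aperture rectangle, separate in $x$ and $y$, and evaluate each factor as a Fresnel integral. The only difference is one of explicitness: the paper delegates the anti-derivative evaluation to a citation of \cite{1956_Polk_TAP}, whereas you carry out the substitution and constant bookkeeping in full and correctly observe that the clean arguments $\sqrt{a}$ and $\sqrt{a}N$ with $a=\lambda/(8z_{\rm eff})$ hinge on the half-wavelength spacing $\delta=\lambda/2$, an assumption the lemma leaves implicit.
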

\begin{proof}
The proof is given in Appendix~\ref{App_Fresnel_Approx}
and is inspired by the derivation in \cite[Eq. (22)]{1956_Polk_TAP}.
\end{proof}

\section{MLA via Two Separated ULAs on the Same Line}
\label{S_NF_TwoArr}

\begin{figure}
    \centering
     \begin{overpic}[width=\linewidth]{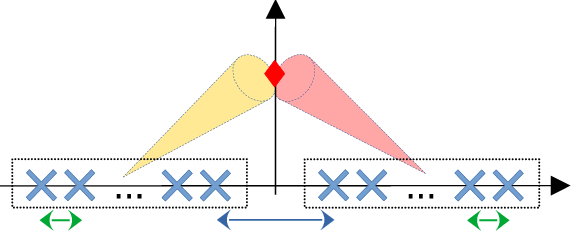}
     \put(3,14){ULA-$1$}
     \put(83,14){ULA-$2$}
     \put(84,-3){\small $ \delta$}
     \put(9,-3){\small $ \delta$}
     \put(46.5,-3){\small $\Delta$}
     \put(50,26){\small $(0,0,F)$}
     \put(44,40){\small$z$-axis}
     \put(94,2){\small$x$-axis}
     \end{overpic}
    \vspace{0.7mm}
    \caption{MLA composed by two ULAs deployed $\Delta$\,m apart. Both ULAs are focused at $(0,0,F)$.}
    \label{F_twoULAs}
    \vspace{-3mm}
\end{figure}

We will now consider an isotropic transmitter located at $(0,0,z)$, communicating with a receiver, equipped with an MLA consisting of two ULAs. The centers of the innermost antennas in the two ULAs are separated by $\Delta$ meters, as illustrated in Fig.~\ref{F_twoULAs}.  
We note that the two ULAs are positioned along the same line on the $x$-axis, which differs from the typical cell-free massive MIMO setup where ULAs can be deployed at arbitrary locations.
Each of the ULAs is equipped with $N$ antennas. The $n$-th antenna element for ULA-$\ell \in \{1,2\}$ is located at the point
\begin{equation}\label{eq:coord_twoULas}
    \bar{x}_n^{(\ell)} = \left( n - \frac{N+1}{2}\right) \delta + {\left( \ell - \frac{3}{2}\right) \left(\Delta + (N-1) \delta\right)}.
\end{equation} 
Notice that the second term in \eqref{eq:coord_twoULas} is either plus or minus $\overline{\Delta}$, where $\overline{\Delta} = \frac{\Delta + (N-1) \delta}{2}$.
A special case of the above configuration occurs when $\Delta= \delta$, in which case the two ULAs will form a single ULA with inter-element spacing $ \delta$. The aperture length of the array is $D_{\rm array} = \Delta + (2N-1)\delta $. The Fraunhofer distance of the array increases with the separation $\Delta$ between the two ULAs. This might imply that beamfocusing can be achieved even with a small total number of antennas deployed in the two ULAs (small aperture length in each ULA) by appropriately adjusting the separation $\Delta$. To verify this, we will analytically study the beamfocusing behavior for this MLA model.

We begin by adapting the normalized array gain with the MF-based beamfocusing expression in \eqref{eq:MF_near} to account for the two ULAs in the MLA. The resulting expression is given in \eqref{eq:G_M2_ULA}, at the top of the next page,
\begin{figure*}
\begin{align}
& G_{2} = \frac{1}{2NA}
\frac{\Bigg\vert \sum\limits_{n=1}^N \int_{\mathcal{A}_{1,n}}     w_{1,n} E(x,y) dx dy   +\sum\limits_{n=1}^N \int_{\mathcal{A}_{2,n}}  w_{2,n} E(x,y) dx dy  
\Bigg\vert^2}{\int_{\mathcal{A}} \left|E(x,y)\right|^2 dx dy}.
\label{eq:G_M2_ULA}
\end{align}
\end{figure*}
where $\mathcal{A}_{\ell,n}$ is the square area covered by the antenna-$n$ of ULA-$\ell$. By utilizing the Fresnel approximation, we obtain the normalized array gain expression:
\begin{multline}\label{eq:G2app_A}
\hat{G}_2  = \frac{1}{2NA^2}\Bigg\vert \sum_{n=1}^N \int_{\mathcal{A}_{1,n}}  w_{1,n}   e^{-\imagunit \frac{2\pi}{\lambda}\left(\frac{(x-\Delta)^2}{2z}+\frac{y^2}{2z}\right)} dx dy + 
\\
\sum_{n=1}^N \int_{\mathcal{A}_{2,n}} w_{2,n}    e^{-\imagunit \frac{2\pi}{\lambda}\left(\frac{(x+\Delta)^2}{2z}+\frac{y^2}{2z}\right)} dx dy 
\Bigg\vert^2,
\end{multline} 
where the MF weights $w_{1,n}$ and $w_{2,n}$ are given as in \eqref{eq:combining-coefficient2a}-\eqref{eq:combining-coefficient2b} at the top of the next page. Following the same methodology as in Section~\ref{subsec:beamfocusing-ULA}, they can be approximated by the continuous MF functions  $w_{1}(x,y) = e^{+\imagunit\frac{2\pi}{\lambda}\left(\frac{(x-\Delta)^2}{2F}+\frac{y^2}{2F}\right)}/\sqrt{2N}$ and  $w_{2}(x,y) = e^{+\imagunit\frac{2\pi}{\lambda}\left(\frac{(x+\Delta)^2}{2F}+\frac{y^2}{2F}\right)}/\sqrt{2N}$.
\begin{figure*}
\begin{align}
 &   w_{1,n}= \frac{\left(\int_{\mathcal{A}_{1,n}} e^{-\imagunit \frac{2\pi}{\lambda}\left(\frac{(x-\Delta)^2}{2F}+\frac{y^2}{2F}\right)} d x d y\right)^*}{\sqrt{\sum\limits_{j=1}^{N} \left( \left\vert \int_{\mathcal{A}_{1,j}}e^{-\imagunit \frac{2\pi}{\lambda}\left(\frac{(x-\Delta)^2}{2F}+\frac{y^2}{2F}\right)} d x d y     \right \vert^2 + \left\vert \int_{\mathcal{A}_{2,j}}e^{-\imagunit \frac{2\pi}{\lambda}\left(\frac{(x+\Delta)^2}{2F}+\frac{y^2}{2F}\right)} d x d y     \right \vert^2 \right)  }},\label{eq:combining-coefficient2a} \\
&    w_{2,n}= \frac{\left(\int_{\mathcal{A}_{2,n}} e^{-\imagunit \frac{2\pi}{\lambda}\left(\frac{(x+\Delta)^2}{2F}+\frac{y^2}{2F}\right)} d x d y\right)^*}{\sqrt{\sum\limits_{j=1}^{N} \left( \left\vert \int_{\mathcal{A}_{1,j}}e^{-\imagunit \frac{2\pi}{\lambda}\left(\frac{(x-\Delta)^2}{2F}+\frac{y^2}{2F}\right)} d x d y     \right \vert^2 + \left\vert \int_{\mathcal{A}_{2,j}}e^{-\imagunit \frac{2\pi}{\lambda}\left(\frac{(x+\Delta)^2}{2F}+\frac{y^2}{2F}\right)} d x d y     \right \vert^2 \right)  }}.\label{eq:combining-coefficient2b}
\end{align}
\hrulefill
\end{figure*}
Therefore, we can approximate \eqref{eq:G2app_A} by 
 \begin{align}
\notag& \hat{G}_2 = \frac{1}{{{(2  N A)}^{2}}}  \cdot \\ \notag
& \Bigg|  
\int\limits_{-\frac{N \delta}{2} }^{\frac{N\delta}{2} }
\int\limits_{-\frac{ \delta}{2}}^{\frac{ \delta}{2}} 
e^{\imagunit \frac{2\pi}{\lambda} \left(\frac{(x-\overline{\Delta})^2}{2F} + \frac{y^2}{2F}\right)}
{e}^{-\imagunit\frac{2\pi }{\lambda }\left(\frac{{{(x-\overline{\Delta} )}^{2}}}{2z}+\frac{{{y}^{2}}}{2z} \right)} dxdy  
\\ 
& +  \int\limits_{-\frac{N\delta}{2} }^{\frac{N\delta}{2} }\int\limits_{-\frac{ \delta}{2}}^{\frac{ \delta}{2}} 
e^{\imagunit \frac{2\pi}{\lambda} \left(\frac{(x+\overline{\Delta})^2}{2F} + \frac{y^2}{2F}\right)}
{e}^{-\imagunit\frac{2\pi }{\lambda }\left(\frac{{(x + \overline{\Delta} )^{2}}}{2z}+\frac{{{y}^{2}}}{2z} \right)}  dxdy \Bigg|^2 \label{eq:G2app}.
\end{align}

\begin{figure}
\centering
\subfloat[A single ULA.]
{\includegraphics[width=0.508\textwidth]{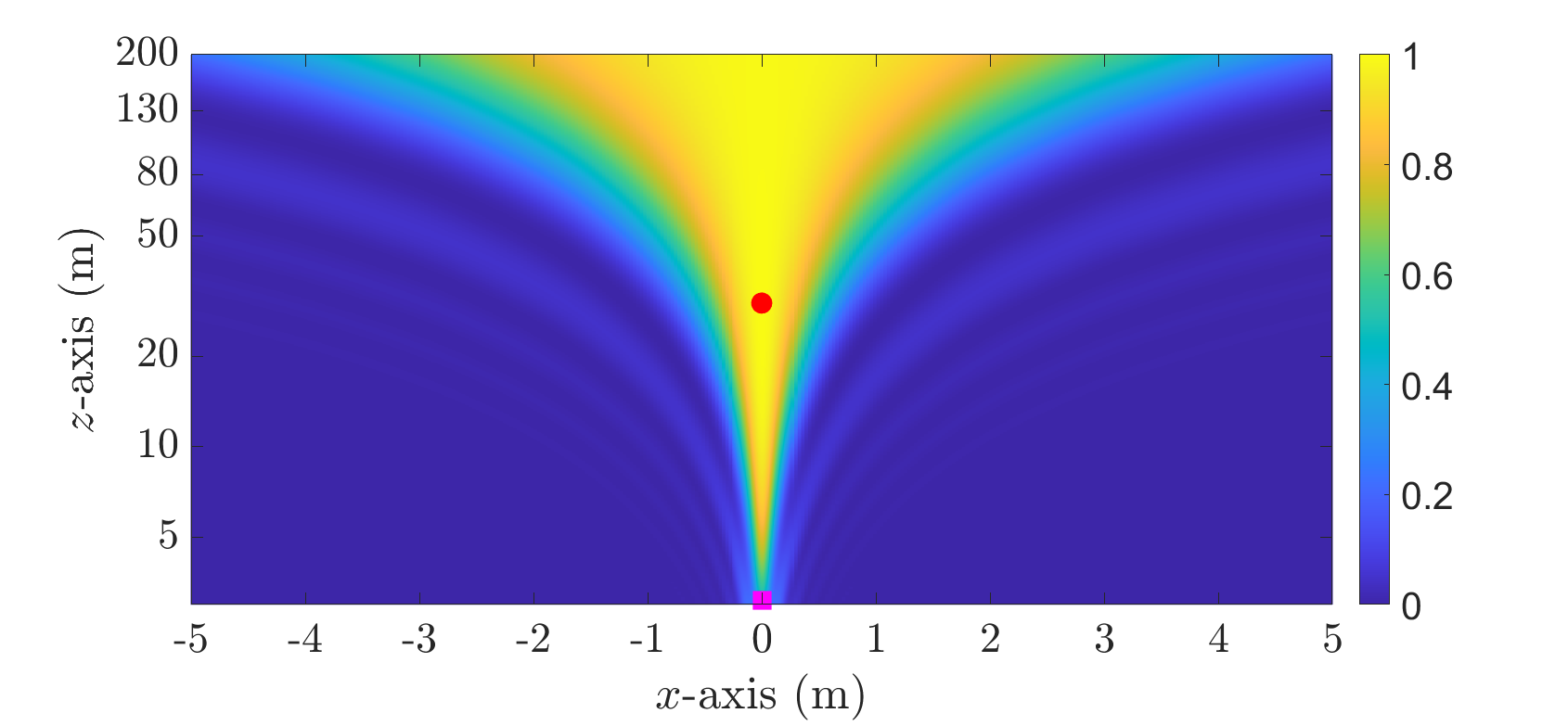}}\hfill
\centering
\subfloat[Two ULAs separated by $5$\,m apart.]
{\includegraphics[width=0.51\textwidth]{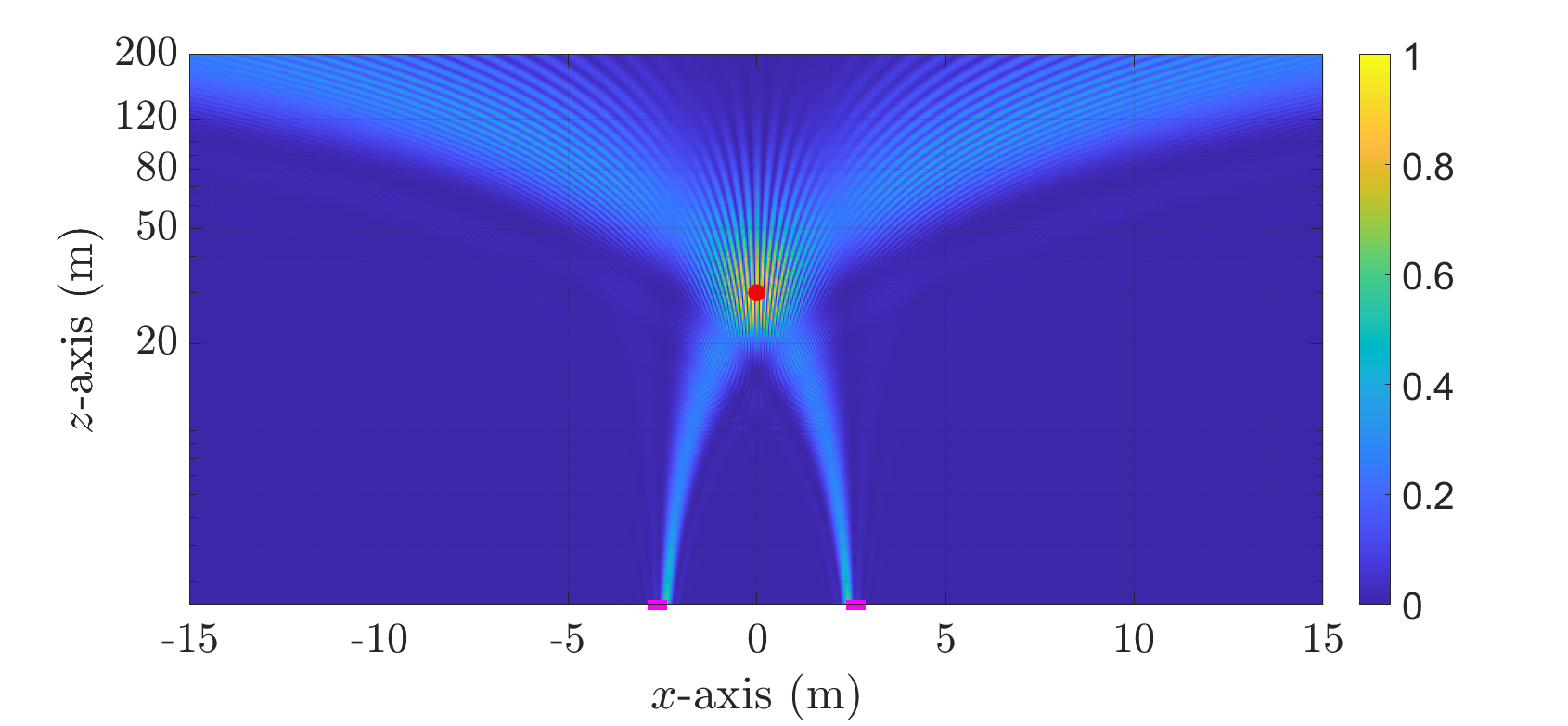}}
\caption{The beampattern of a single ULA and MLA. The coloring shows the normalized array gain. The magenta rectangles indicate the location of the ULA, while the red circle indicates the location of the user. We use a logarithmic scale on the $z$-axis.}
\label{F_BW}
\end{figure}

Let us compare the normalized array gain of a standard ULA with that of an MLA. The standard ULA has \( N = 50 \) half-wavelength-spaced antennas, while the MLA consists of two ULAs, each with \( 25 \)  half-wavelength-spaced antennas and separated by a distance of \( \Delta = 5 \) meters.  We set the focus of the array at the location $(0,0,30)$ and the arrays operate at the $15$\,GHz\footnote{Throughout the paper, we consider a carrier frequency of $15$\,GHz, which is a promising upper mid-band frequency band for the next wireless generation \cite{2024_Bjornson_Arxiv}. However, the conclusions presented apply to other carrier frequencies if the simulation setup is scaled accordingly.} carrier frequency. 
Fig.~\ref{F_BW}(a) depicts the normalized array gain (beampattern) of the ULA across the $xz$-plane. The array gain is larger in the yellow and green regions. We observe that the beam energy spreads out infinitely behind the user. This is a typical beampattern for far-field beamforming and was expected since the user distance is greater than the Fraunhofer array distance of $23$\,m. In contrast, Fig.~\ref{F_BW}(b) shows the MLA beampattern, where the signal is focused at the same location. The array gain is only large in a limited region around the focal point. This shows that beamfocusing can be achieved using the MLA configuration, even with the same total number of antennas as the ULA. In the following, we will provide a theoretical characterization of the beampattern from the MLA.

 \subsection{Beamwidth Analysis}\label{S_Beamwidth_2arrays}

The transverse beamwidth (BW) around the focal point $(0,0,F)$  can be analyzed by considering the normalized array gain obtained along the  $x$-axis.
Specifically, we compute the array gain that a signal transmitted from another point $(x_t,0,F)$ with coordinate $x_t$ along the $x$-axis achieves. Note that the distance $F$ along the $z$-axis is the same as for the focal point. The new coordinate along the $x$-axis must be incorporated into the phase-shift expression when computing the normalized array gain.
Generalizing \eqref{eq:G2app} to arbitrary $x_t$ and setting $z=F$ , we can then write the normalized array gain approximation in \eqref{eq:BW_twoArr_first} at the top of the next page. 
\begin{figure*}
\begin{align}
\notag
    &\hat{G}_{2,x_t}  = \frac{\left |
      \left(
       \int\limits_{-\frac{N\delta}{2} }^{\frac{N\delta}{2} }
      \int\limits_{-\frac{ \delta}{2}}^{\frac{ \delta}{2}} 
       e^{\imagunit \frac{2\pi}{\lambda} \left(\frac{(x-\overline{\Delta})^2}{2F} + \frac{y^2}{2F}\right)}
     {e}^{-\imagunit\frac{2\pi }{\lambda }\left(\frac{{{(x-\overline{\Delta} -x_t)}^{2}}}{2F}+\frac{{{y}^{2}}}{2F} \right)} dx
    dy  +  \int\limits_{-\frac{N\delta}{2} }^{\frac{N\delta}{2} }\int\limits_{-\frac{ \delta}{2}}^{\frac{ \delta}{2}} 
     e^{\imagunit \frac{2\pi}{\lambda} \left(\frac{(x+\overline{\Delta})^2}{2F} + \frac{y^2}{2F}\right)}
     {e}^{-\imagunit\frac{2\pi }{\lambda }\left(\frac{{(x + \overline{\Delta} -x_t)^{2}}}{2F}+\frac{{{y}^{2}}}{2F} \right)}  dxdy \right) \right|^{2}}{{{(2  N \delta^2)}^{2}}}  \nonumber 
    \\
    &=\frac{1}{{{(2  N \delta^2)}^{2}}}  \left|
      \delta  \left(\int\limits_{-\frac{N\delta}{2}  -\overline{\Delta}}^{\frac{N\delta}{2}  -\overline{\Delta}}
       e^{\imagunit \frac{2\pi}{\lambda F} u_1 x_t} du_1 + 
     \int\limits_{-\frac{N\delta}{2}   +\overline{\Delta}}^{\frac{N\delta}{2}   +\overline{\Delta}} e^{\imagunit \frac{2\pi}{\lambda F} u_2 x_t} du_2
      \right) \right|^{2} 
    =\Bigg|\frac{1}{2} \sinc\left(\frac{N x_t}{2F}\right) \left(e^{\imagunit \left(\frac{2\pi \overline{\Delta} x_t}{\lambda F}\right) } +   e^{-\imagunit \left(\frac{2\pi \overline{\Delta} x_t}{\lambda F}\right) } \right)  \Bigg|^2 
    \label{eq:BW_twoArr_first}
\end{align}
\hrulefill
\end{figure*}
The final expression can be simplified as
\begin{equation} \label{eq:BW_twoArr}
    \hat{G}_{2,x_t} = \sinc^2 \left(\frac{N x_t}{2F}\right) \cos^2\left(\frac{2 \pi \overline{\Delta} x_t}{\lambda F} \right)    \leq \sinc^2\left(\frac{N x_t}{2F}\right),
\end{equation}
where the upper bound is the envelope of the function since the cosine-term oscillates more rapidly than the sinc-term.

\begin{figure}
\centering
\subfloat[Total aperture length of $D_{\rm array} = 1$\,m.  ]
{ \includegraphics[width=0.45\textwidth]{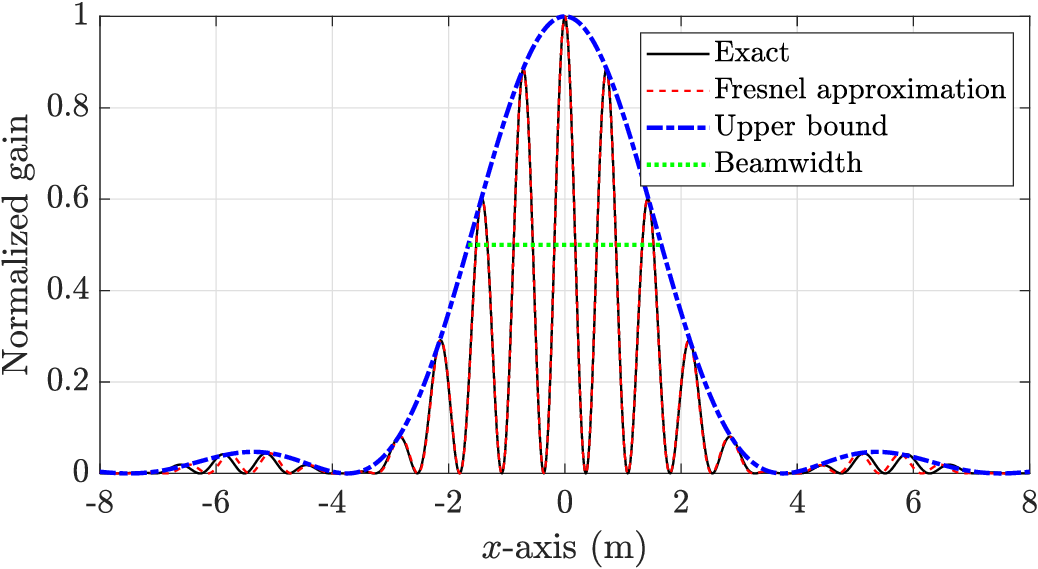}}\hfill
\centering
\subfloat[Total aperture length of $D_{\rm array} = 2$\,m.]
{ \includegraphics[width=0.45\textwidth]{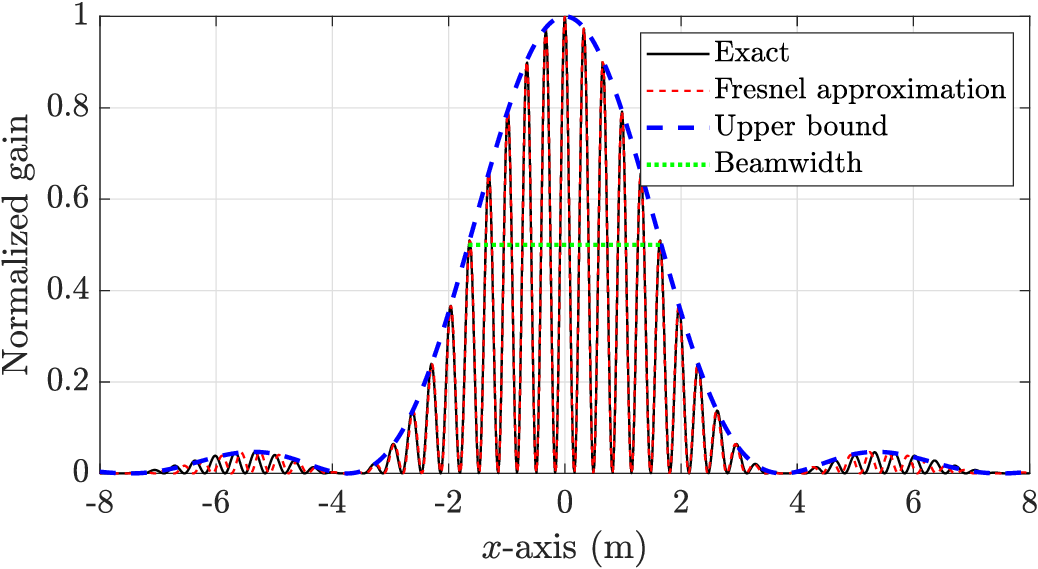}}
\caption{The beampattern of MLA along the $x$-axis in a setup with two ULAs, each having $N=16$ antennas.}
\label{F_BW2}
\end{figure}

We will use the term \emph{beamfocusing region} to refer to the area where the envelope of the normalized array gain is larger than $0.5$. This is in line with the classical half-power beamwidth concept. The half-power beamwidth can be analytically expressed as: 
\begin{equation}\label{eq:3dB_BW}
    0.443 \approx \frac{N |x_t|}{2F} \rightarrow {\rm BW_{\rm 3 dB}} \approx \frac{1.77F}{N}, 
\end{equation}
since  $\sinc^2(0.443)\approx 0.5$. Note that the upper bound is independent of the inter-array spacing $\Delta$.  However, the spacing will determine the oscillations in the actual gain pattern, as illustrated by the following example.

We evaluate the beamwidth of the MLA in Fig.~\ref{F_BW2}. Each ULA in the MLA is equipped with $N=16$ antennas. The focal point of the MF is at $(0,0,30)$\,m and we show the normalized array gain for points $(x_t,0,30)$\,m for different values of $x_t$. The black curve is the exact normalized array gain (computed without approximation), and the red-dashed curve is based on the analytical formula for $\hat{G}_{2,x_t}$ in \eqref{eq:BW_twoArr}, which was computed using the Fresnel approximation and the continuous MF approximation. The blue-dashed curve corresponds to the upper bound (i.e., envelope) in \eqref{eq:BW_twoArr} and the green-dashed line shows the interval where the upper bound is above 0.5, which we refer to as the (half-power) beamwidth. 

In Fig.~\ref{F_BW2}(a), the aperture length of the MLA is set to $D_{\rm array} =  1$\,m. We observe that the normalized array gain oscillates along the $x$-axis. The gain oscillates more when we consider the larger aperture length of $D_{\rm array} =  2$\,m in Fig.~\ref{F_BW2}(b) (i.e., a larger distance between the ULAs). Gain fluctuations are undesirable because they imply that the array gain can drop to zero unexpectedly if a user located at the focal point moves within the beamfocusing region. Therefore, we will now analytically derive conditions for eliminating them. We first characterize the number of peaks that the ripple has within the beamfocusing region. The peaks appear when $\cos^2(\frac{2 \pi \overline{\Delta}x_t}{\lambda F}) = 1$. Since $\cos^2\left(  \pi k\right)=1$ for any integer value of $k$, the peaks occur when 
\begin{equation}
    x_t = \pm\frac{\lambda F}{2 \overline{\Delta}} k.
\end{equation}
There is a peak in the center of the beamfocusing region and since the distance between two adjacent peaks is $\frac{\lambda F}{2\overline{\Delta}}$, the number of peaks within the beamfocusing region becomes 
\begin{equation}\label{eq:num_nulls}
     2\left\lfloor\frac{{\rm BW_{3dB}}/2} {\lambda F/ (2  \overline{\Delta})}  \right\rfloor +1 \approx 2\left\lfloor\frac{1.77 \overline{\Delta}} {N \lambda}  \right\rfloor +1.
\end{equation}
This expression reveals that the number of ripples within the beamfocusing region decreases when the number of antennas per ULA $(N)$ increases or when $\overline{\Delta}$ decreases. We only have one peak within the beamfocusing region  if $\frac{1.77 \overline{\Delta}} {N \lambda}<1$. Therefore,  the number of antennas in each ULA that satisfies this condition is
\begin{equation}\label{eq:Nmin}
    N > {\frac{ 1.77 \overline{\Delta}}{\lambda}}.
\end{equation}
Substituting $\overline{\Delta} =  \frac{\Delta + (N-1) \delta}{2}$ into \eqref{eq:Nmin}, we  obtain
\begin{equation}\label{eq:Nmin2a}
    1.13 > {\frac{{\Delta}}{N \lambda}} + \frac{(N-1)}{N \lambda} \delta.
\end{equation}
If $\delta = \lambda/2$, we obtain the following condition:
\begin{equation}\label{eq:Nmin2}
    0.63 + \frac{1}{2N} >  \frac{\Delta}{N \lambda}.
\end{equation}
Since the total aperture length of the MLA with two separated ULAs is $D_{\rm array} = (2N-1)\frac{\lambda}{2} + \Delta$, we can rewrite \eqref{eq:Nmin2} to express the condition in terms of  the aperture length of the two ULAs and the total aperture length of the MLA:
\begin{equation}\label{eq:Nmin3}
    \frac{N \lambda}{D_{\rm array}}  \geq  0.62.
\end{equation}
Hence, the total aperture length of the individual ULAs must be at least $62 \%$ of the total aperture length of the MLA to obtain only a single focused beam.

To illustrate this result, we plot the beampattern across the $xz$-plane and the normalized array gain along the $x$-axis in Fig.~\ref{F_Beampattern_nice}, when the MF is focused on $(0,0,30)$. We set the total aperture length of the MLA to $D_{\rm array} = 2$\,m and the number of antennas per ULA to $N=64$. Hence, the separation between the array is $\Delta = 0.72$\,m and the ratio $(N \lambda/D_{\rm array})$ is  $64 \%$ satisfying the condition in \eqref{eq:Nmin3}. Fig.~\ref{F_Beampattern_nice} shows that there is only one peak within the beamfocusing region (i.e., above the green line). This contrasts to Fig.~\ref{F_BW2}(b), where the total aperture length was the same, but there are fewer antennas per ULA, so the beamfocusing region is wider and has $9$ peaks inside it. An alternative way to achieve ripple-free beamfocusing is to fill the entire $2$\,m aperture length with half-wavelength-spaced antennas. This would require $200$ antennas since $\lambda=0.02$\,m in this example. In contrast, the MLA configuration only required $128$ antennas to achieve the desired beamfocusing capability.

\begin{figure}
\centering
\subfloat[Beampattern in $xz$-plane.]
{\includegraphics[width=0.9\linewidth]{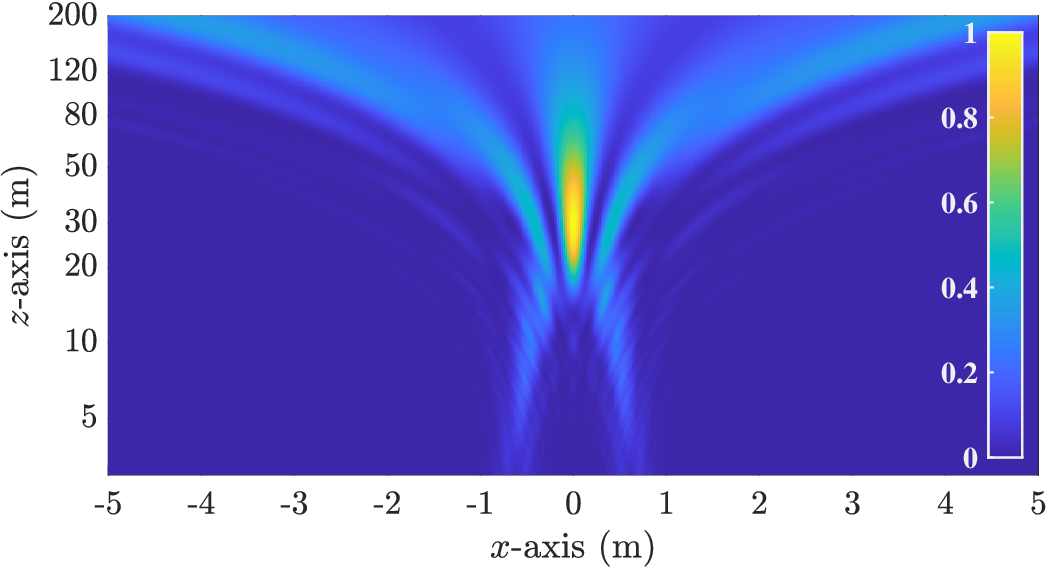}}\hfill
\centering
\subfloat[Normalized array gain at $(x,0,30)$.]
{\includegraphics[width=0.9\linewidth]{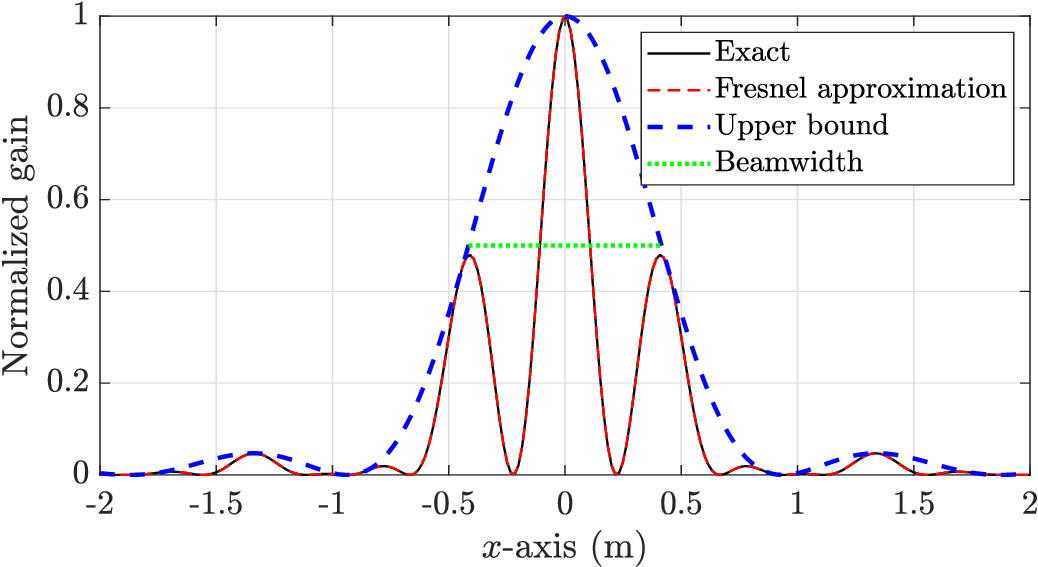}}
\caption{The normalized array gain achieved by beamfocusing from an MLA with two ULAs, each with $N=64$ antennas, with the focal point $(0,0,30)$\,m.}
\label{F_Beampattern_nice}
\end{figure}
 
\subsection{Beamdepth Analysis}

The depth of the beam can also be analyzed using the normalized array gain. When beamfocusing is performed, we expect the normalized array gain to be only above $0.5$ in a finite interval along the $z$-axis. To analyze this, we assume that the MF is focused at $(0,0,F)$ and consider potential transmitter locations expressed as $(0,0,z_t)$. We can compute the following closed-form expression for how the normalized array gain in \eqref{eq:G2app} varies with $z_t$.
  
\begin{theorem}\label{Fresnel_Approx_twoULAs}
Suppose that the transmitter is located at $(0,0,z_t)$ and that MLA uses MF focused on $(0, 0, F)$.
The  Fresnel approximation of the normalized array gain is
\begin{multline}\label{eq_III_ApproxGainTwoULAs}
\hat{G}_{2,z_t} =  \frac{1}{(2 N a)^2}  \left( C^2(\sqrt{a}) + S^2(\sqrt{a}) \right)  \\
\cdot\left( \left( C(\beta_1) +C(\beta_2) \right)^2 + \left( S(\beta_1) +S(\beta_2) \right)^2\right),
\end{multline}
 where   $\beta_1 = \sqrt{a} N + \sqrt{\frac{2 }{\lambda z_{\rm eff}}}\overline{\Delta}$,  $\beta_2 = \sqrt{a} N - \sqrt{\frac{2 }{\lambda z_{\rm eff}}}\overline{\Delta}$, $a = \frac{\lambda}{8{z}_{\rm eff}}$, and $z_{\rm eff} = \frac{Fz_t}{|F-z_t|} $.
\end{theorem}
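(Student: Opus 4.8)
The plan is to evaluate the two-dimensional integral inside the magnitude in \eqref{eq:G2app}, with $z$ replaced by the generic transmitter depth $z_t$, by reducing each of the two summands to Fresnel integrals, exactly as in the single-ULA derivation behind Lemma~\ref{Fresnel_Approx_ULA}. First I would merge the focusing phase $e^{+\imagunit\frac{2\pi}{\lambda}(\cdot)/(2F)}$ and the propagation phase $e^{-\imagunit\frac{2\pi}{\lambda}(\cdot)/(2z_t)}$ in each term. Since $\frac{1}{2F}-\frac{1}{2z_t}=\pm\frac{1}{2z_{\rm eff}}$ with $z_{\rm eff}=\frac{Fz_t}{|F-z_t|}$, the combined integrand of the first summand $I_1$ is $e^{\pm\imagunit\frac{\pi}{\lambda z_{\rm eff}}((x-\overline{\Delta})^2+y^2)}$ and that of the second $I_2$ is the same with $x+\overline{\Delta}$. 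Crucially, the $y$-dependence is identical and separable in both terms, so the bracket factors as a common $y$-integral times the sum of two $x$-integrals. Without loss of generality I would take $z_t>F$ so the exponent sign is $+$; the case $z_t<F$ only conjugates every factor and hence leaves the final quantities $C^2(\sqrt a)+S^2(\sqrt a)$ and $(C(\beta_1)+C(\beta_2))^2+(S(\beta_1)+S(\beta_2))^2$ unchanged, consistent with the appearance of $|F-z_t|$ in $z_{\rm eff}$.

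Next, for the $y$-integral $\int_{-\delta/2}^{\delta/2}e^{\imagunit\frac{\pi}{\lambda z_{\rm eff}}y^2}\,dy$ I would substitute $t=y\sqrt{2/(\lambda z_{\rm eff})}$, which turns the exponent into $\imagunit\pi t^2/2$ and, using the half-wavelength spacing $\delta=\lambda/2$ together with $a=\frac{\lambda}{8z_{\rm eff}}$, maps the endpoint $y=\delta/2$ to $t=\sqrt{a}$. Oddness of the Fresnel integrals then gives $\int_{-\sqrt a}^{\sqrt a}e^{\imagunit\pi t^2/2}\,dt=2\bigl(C(\sqrt a)+\imagunit S(\sqrt a)\bigr)$, which contributes the factor $C^2(\sqrt a)+S^2(\sqrt a)$ after squaring the magnitude. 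This step is essentially identical to the one already carried out for Lemma~\ref{Fresnel_Approx_ULA}.

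The core of the proof is the $x$-integral. For $I_1$ I would shift $u=x-\overline{\Delta}$ and for $I_2$ shift $u=x+\overline{\Delta}$, so each becomes $\int e^{\imagunit\frac{\pi}{\lambda z_{\rm eff}}u^2}\,du$ but over the asymmetric ranges $[-\tfrac{N\delta}{2}-\overline{\Delta},\,\tfrac{N\delta}{2}-\overline{\Delta}]$ and $[-\tfrac{N\delta}{2}+\overline{\Delta},\,\tfrac{N\delta}{2}+\overline{\Delta}]$. Applying the same substitution $t=u\sqrt{2/(\lambda z_{\rm eff})}$ maps $\tfrac{N\delta}{2}\mapsto N\sqrt a$ and $\overline{\Delta}\mapsto \sqrt{2/(\lambda z_{\rm eff})}\,\overline{\Delta}$, so the four endpoints become exactly $\pm\beta_1,\pm\beta_2$ with $\beta_1=N\sqrt a+\sqrt{2/(\lambda z_{\rm eff})}\,\overline{\Delta}$ and $\beta_2=N\sqrt a-\sqrt{2/(\lambda z_{\rm eff})}\,\overline{\Delta}$. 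Using $C(-w)=-C(w)$ and $S(-w)=-S(w)$, the first integral evaluates over $[-\beta_1,\beta_2]$ to $\bigl(C(\beta_1)+C(\beta_2)\bigr)+\imagunit\bigl(S(\beta_1)+S(\beta_2)\bigr)$, and the second, over $[-\beta_2,\beta_1]$, yields the \emph{same} value. This symmetric collapse is the key algebraic observation: it is why the two ULAs add coherently into a single $\bigl(C(\beta_1)+C(\beta_2)\bigr)+\imagunit\bigl(S(\beta_1)+S(\beta_2)\bigr)$ factor rather than producing cross terms.

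Finally I would multiply the common $y$-factor by the summed $x$-factor and take the squared magnitude; collecting the Jacobian $\sqrt{\lambda z_{\rm eff}/2}$ and the factor $2$ (from evenness) contributed by each of the two substitutions gives $|I_1+I_2|^2=4\lambda^2 z_{\rm eff}^2\bigl(C^2(\sqrt a)+S^2(\sqrt a)\bigr)\bigl((C(\beta_1)+C(\beta_2))^2+(S(\beta_1)+S(\beta_2))^2\bigr)$. Dividing by the prefactor $1/(2NA)^2$ from \eqref{eq:G2app} and substituting $A=\delta^2=\lambda^2/4$ together with $a=\frac{\lambda}{8z_{\rm eff}}$ collapses the denominator to $(2Na)^2$ via the constant identity $4a^2\lambda^2 z_{\rm eff}^2=A^2$, yielding \eqref{eq_III_ApproxGainTwoULAs}. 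I expect the main obstacle to be purely bookkeeping: tracking the four shifted integration limits so that they assemble into $\pm\beta_1,\pm\beta_2$ correctly, and verifying the cancellation $4a^2\lambda^2 z_{\rm eff}^2=A^2$ that trivializes the prefactor; the sign ambiguity of $\frac{1}{F}-\frac{1}{z_t}$, and the degenerate limit $z_t=F$ where $z_{\rm eff}\to\infty$ and the gain tends to $1$, need only a one-line symmetry remark.
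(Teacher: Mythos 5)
Your proposal is correct and follows essentially the same route as the paper's Appendix~\ref{App_Fresnel_Approx_twoULAs}: separate the $y$-integral, combine the focusing and propagation phases into a single quadratic phase governed by $z_{\rm eff}$, shift each ULA's $x$-integral to the asymmetric limits that map to $\pm\beta_1,\pm\beta_2$ under the Fresnel substitution, and invoke oddness of $C(\cdot)$ and $S(\cdot)$ to collapse the result. The only (cosmetic) difference is that you observe directly that both shifted integrals evaluate to the same value $\left(C(\beta_1)+C(\beta_2)\right)+\imagunit\left(S(\beta_1)+S(\beta_2)\right)$, whereas the paper first writes the intermediate form with $C(\beta_i)-C(-\beta_i)$ terms and a $1/(4Na)^2$ prefactor before applying oddness; your constant check $4a^2\lambda^2 z_{\rm eff}^2=A^2$ (valid for $\delta=\lambda/2$) is also consistent.
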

\begin{proof}
The proof is given in Appendix~\ref{App_Fresnel_Approx_twoULAs}.
\end{proof}

\begin{figure}
    \centering
    \includegraphics[width=0.9\linewidth]{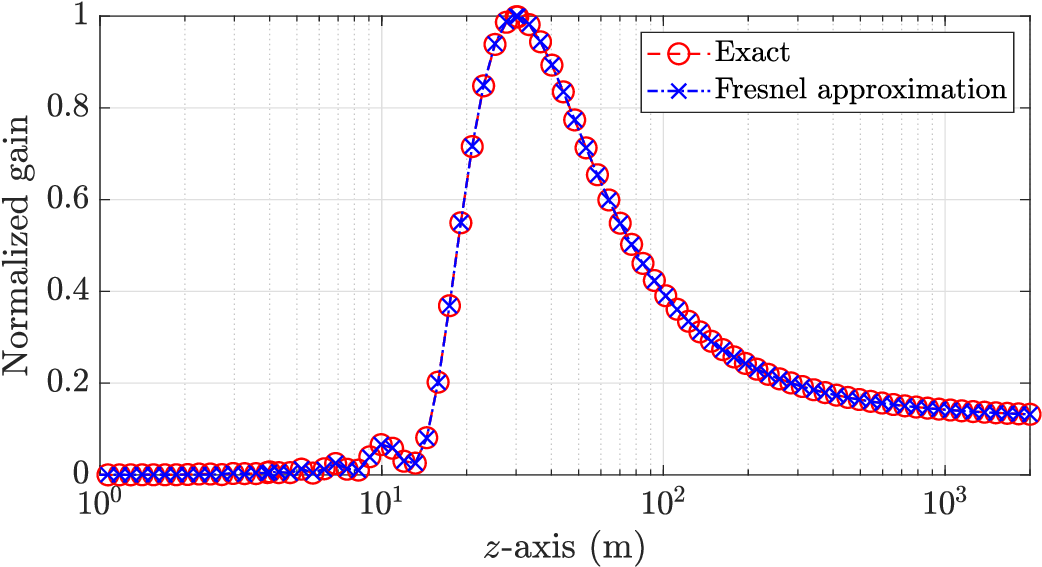}
    \caption{The Fresnel approximation of MLA with the number of antennas in each ULA $N=64$ and focal point $(0,0,30)$\,m.}
    \label{F_BD1}
\end{figure}
 We evaluate the Fresnel approximation of the normalized array gain in \eqref{eq_III_ApproxGainTwoULAs} by plotting Fig.~\ref{F_BD1}. The setup in Fig.~\ref{F_BD1} is identical to the one in Fig.~\ref{F_Beampattern_nice}. The Fresnel approximation results in virtually the same gain as the exact value, obtained by numerically evaluating the array gain expression in \eqref{eq:G_M2_ULA} with integrated MF at focal point $(0,0,30)$\,m. 
\section{MLA via Multiple Separated ULAs}
\label{S_NF_MultiArr}

We now generalize the beamwidth and beamdepth analysis from the previous section to a setup where the MLA consists of $L \geq 2$ ULAs. For notational convenience, we assume that $L$ is an even number, but similar results can be established for odd values. The $L$ ULAs are located along the same line on the $x$-axis and each consists of $N$ antennas. The distance between the centers of the closest antennas to any adjacent ULA is $\Delta$ meters.  Hence, antenna $n$ in ULA $\ell \in \{1,...,L\}$ is located at the $x$-coordinate
\begin{equation}\label{eq:coord_multiULas}
    \bar{x}_n^{(\ell)} = \left( n - \frac{N+1}{2}\right) \delta + {\left( \ell - \frac{L+1}{2}\right) \left(\Delta + (N-1) \delta\right)}.
\end{equation}
For any impinging electric field $E(x,y)$, receive combining weights $w_{\ell,n}$,  the normalized antenna array gain for antenna $n$ in ULA $\ell$  is
\begin{align}
& G_{\rm MLA} = 
\frac{\Bigg\vert \sum_{\ell=1}^L \sum_{n=1}^N \int_{\mathcal{A}_{\ell,n}}  w_{\ell,n}   E(x,y) dx dy   
\Bigg\vert^2}{LNA\int_{\mathcal{A}} \left|E(x,y)\right|^2 dx dy}.
\label{eq_NormalizedArrayGain_multi}
\end{align}
If an isotropic transmitter is located at $(x_t,0,z)$, then we can use the same Fresnel approximation and small antenna approximation as in the previous section. We can then obtain the approximate normalized array gain in \eqref{eq_G_MLA}, at the top of the next page, where $w(x,y)$ denotes the continuous receiver filter.
\begin{figure*}
\begin{align}
\hat{G}_{\rm MLA}  = \frac{\left| \sum_{\substack{k=1 \\ k \text{ odd}}}^{L-1}\left(
\int\limits_{-\frac{N\delta}{2} }^{\frac{N\delta}{2} }
\int\limits_{-\frac{ \delta}{2}}^{\frac{ \delta}{2}} 
w(x,y)
{e}^{-\imagunit\frac{2\pi }{\lambda }\left(\frac{{{(x-k\overline{\Delta} -x_t)}^{2}}}{2z}+\frac{{{y}^{2}}}{2z} \right)} dx
dy  +  \int\limits_{-\frac{N\delta}{2} }^{\frac{N\delta}{2} }\int\limits_{-\frac{ \delta}{2}}^{\frac{ \delta}{2}} 
w(x,y)
{e}^{-\imagunit\frac{2\pi }{\lambda }\left(\frac{{(x + k\overline{\Delta} -x_t)^{2}}}{2z}+\frac{{{y}^{2}}}{2z} \right)}  dxdy \right) \right|^{2}}{LNA^2}.
\label{eq_G_MLA}
\end{align}
\end{figure*}
Note that equation \eqref{eq_NormalizedArrayGain_multi} is a generalization of \eqref{eq:G_M2_ULA} and the starting point for deriving the beamwidth and beamdepth expressions in this section.

\subsection{Beamwidth Analysis}

We will now analyze the beamwidth when MF is used with the focal point $(0,0,F)$. We consider a transmitting user located at $(x_t,0,z=F)$, where $x_t$ is a variable. With a continuous MF, the normalized gain expression in \eqref{eq_G_MLA} becomes \eqref{eq:3dB_BW_multi_first} at the top of the next page.
\begin{figure*}
\begin{align}
\notag
&\frac{\left| \sum_{\substack{k=1 \\ k \text{ odd}}}^{L-1}\left(
\int\limits_{-\frac{N\delta}{2} }^{\frac{N\delta}{2} }
\int\limits_{-\frac{ \delta}{2}}^{\frac{ \delta}{2}} 
e^{\imagunit \frac{2\pi}{\lambda} \left(\frac{(x-k\overline{\Delta})^2}{2F} + \frac{y^2}{2F}\right)}
{e}^{-\imagunit\frac{2\pi }{\lambda }\left(\frac{{{(x-k\overline{\Delta} -x_t)}^{2}}}{2F}+\frac{{{y}^{2}}}{2F} \right)} dx
dy  +  \int\limits_{-\frac{N\delta}{2} }^{\frac{N\delta}{2} }\int\limits_{-\frac{ \delta}{2}}^{\frac{ \delta}{2}} 
e^{\imagunit \frac{2\pi}{\lambda} \left(\frac{(x+k\overline{\Delta})^2}{2F} + \frac{y^2}{2F}\right)}
{e}^{-\imagunit\frac{2\pi }{\lambda }\left(\frac{{(x + k\overline{\Delta} -x_t)^{2}}}{2F}+\frac{{{y}^{2}}}{2F} \right)}  dxdy \right) \right|^{2}}{(LN\delta^2)^2}\nonumber \\
&=\frac{\left|
\delta \sum_{\substack{k=1 \\ k \text{ odd}}}^{L-1} \left(\int\limits_{-\frac{N\delta}{2}  -k\overline{\Delta}}^{\frac{N\delta}{2}  -k\overline{\Delta}}
e^{\imagunit \frac{2\pi}{\lambda F} u_1 x_t} du_1 + 
\int\limits_{-\frac{N\delta}{2}   +k\overline{\Delta}}^{\frac{N\delta}{2}   +k\overline{\Delta}} e^{\imagunit \frac{2\pi}{\lambda F} u_2 x_t} du_2
\right) \right|^{2}}{{{(L  N \delta^2)}^{2}}} = \left|\frac{1}{L} \sinc\left(\frac{N x_t}{2F}\right) \sum_{\substack{k=1 \\ k \text{ odd}}}^{L-1}  \left(e^{\imagunit \left(\frac{2\pi k \overline{\Delta} x_t}{\lambda F}\right) } +   e^{-\imagunit \left(\frac{2\pi  k \overline{\Delta} x_t}{\lambda F}\right) } \right)  \right|^2
\label{eq:3dB_BW_multi_first}
\end{align}
\hrulefill
\end{figure*}
The last expression can be simplified as
\begin{align}
    \sinc^2\left(\frac{N x_t}{2F}\right) \Bigg| \frac{2}{L} \sum_{\substack{k=1 \\ k \text{ odd}}}^{L-1} \cos\left(\frac{2k \pi  \overline{\Delta} x_t}{\lambda F} \right)  \Bigg|^2   \leq \sinc^2\left(\frac{N x_t}{2F}\right),\label{eq:3dB_BW_multi}
\end{align}
where the upper bound can be viewed as the envelope of the expression since the cosine terms oscillate more rapidly with $x_t$. This expression is a generalization of \eqref{eq:BW_twoArr}, where we consider $L=2$. Interestingly, the envelope is independent of $L$ and represents the array gain variation of a single ULA with $N$ antennas. Therefore, the largest possible beamwidth of the MLA remains the same regardless of the number of ULAs that it consists of, as long as $N,x_t,$ and $F$ are the same. The ULA configurations only determine the ripples.

As discussed earlier, an ideal beamfocusing pattern should only have one peak in the transverse dimension with a normalized array gain above $0.5$. To find the number of arrays that gives us this desired beamfocusing effect, we propose Algorithm~\ref{Alg_L}. In principle, the algorithm works by increasing the number of arrays by $2$ until only a peak exists within the $3$\,dB BW. The $x$-axis within the $3$\,dB BW is discretized into the $R_{\rm BW}$ grids, which we set to $300$. The function $\text{CountPeaks}(\cdot)$ first interpolates the input data and then counts the number of peaks in the resulting curve.

\begin{algorithm}
    \caption{Calculate the required number of arrays $(L)$}
    \begin{algorithmic}[1]
        \State \textbf{Input:} Total aperture length $D_{\rm array}$, focal distance $F$, number of antennas per ULA $N$, wavelength $\lambda$, and grid resolution of the bandwidth $R_{\rm BW}$ 
        \State \textbf{Output:} Number of arrays $L$
        \State \textbf{Initialization:} $t_0 \gets \infty$, $L \gets 0$
        \State Calculate ${\rm BW}_{\rm 3dB} = \frac{1.77F}{N}$
        \State Discretize the range $-{\rm BW}_{\rm 3dB}/2$ to ${\rm BW}_{\rm 3dB}/2$ into $R_{\rm BW}$ points: $x_t^{(1)}, \dots, x_t^{(R_{\rm BW})}$
        \While{$t_0 > 1$ \textbf{and} $L N \delta < D_{\rm array}$}
            \State Increment $L$ by 2: $L \gets L + 2$
            \State Calculate $\Delta = \frac{D_{\rm array} - (L (N-1) + 1)  \delta}{L-1}$  
            \State Calculate $\overline{\Delta} = \frac{\Delta + (N-1) \delta}{2}$
            \For{$i \gets 1$ \textbf{to} $R_{\rm BW}$}
                \State Calculate $\hat{G}^{(i)}_{\rm BW}$ based on \eqref{eq:3dB_BW_multi}:
                \[
                \hat{G}^{(i)}_{\rm BW} =\sinc^2\left(\frac{N x_t^{(i)}}{2F}\right) \left|  \frac{2}{L}\sum_{\substack{k=1 \\ k \text{ odd}}}^{L-1} \cos\left(\frac{2k \pi \overline{\Delta} x_t^{(i)}}{\lambda F} \right) \right|^2
                \]
            \EndFor
            \State $t_0 \gets \text{CountPeaks}\left(\hat{G}_{\rm BW}^{(1)},\cdots,\hat{G}_{\rm BW}^{(R_{\rm BW})}\right)$
        \EndWhile
        \State \textbf{return} $L$
    \end{algorithmic}\label{Alg_L}
\end{algorithm}
To give a concrete example, Fig.~\ref{F_NvsL} shows the number of arrays $L$ required to achieve the desired beamfocusing effect with respect to the number of antennas $N$ per ULA. We consider an MLA with $D_{\rm array} = 2$m and $F=30$\,m.  We observe that the more antennas per ULA, the fewer arrays required.
This makes sense because having more antennas in each ULA reduces the beamwidth of the envelope and reduces the gap between adjacent ULAs. 
For example, with $N=64 $, only $ L=2 $ arrays are needed, as supported by our analysis in Section~\ref{S_Beamwidth_2arrays} (see also Fig.~\ref{F_Beampattern_nice}). In contrast, when there is only one antenna per ULA, $ L=62 $ is required to achieve the desired beamfocusing effect. This configuration resembles an LS---a ULA with a spacing greater than $ \lambda/2 $.

The configuration that employs a minimum number of antennas in the entire MLA ($NL$) is the LSA. However, an LSA will suffer from grating lobes \cite{2023_Yang_Arxiv,bjornson2024introduction}, which means that it will inadvertently create multiple focal points where the maximum array gain is achieved. This issue is avoided by using an MLA, which also has ripples, but sinc envelopes show that the maximum array gain is only achieved at the desired location. Moreover, when additional objectives such as localization and channel estimation performance are considered, we will demonstrate in Section~\ref{S_loc_ChEst} that an MLA with multiple antennas in each ULA provides excellent performance. We note that there exists a complexity-performance trade-off, with the optimal balance varying depending on the application. 

\begin{figure}
    \centering
    \includegraphics[width=0.9\linewidth]{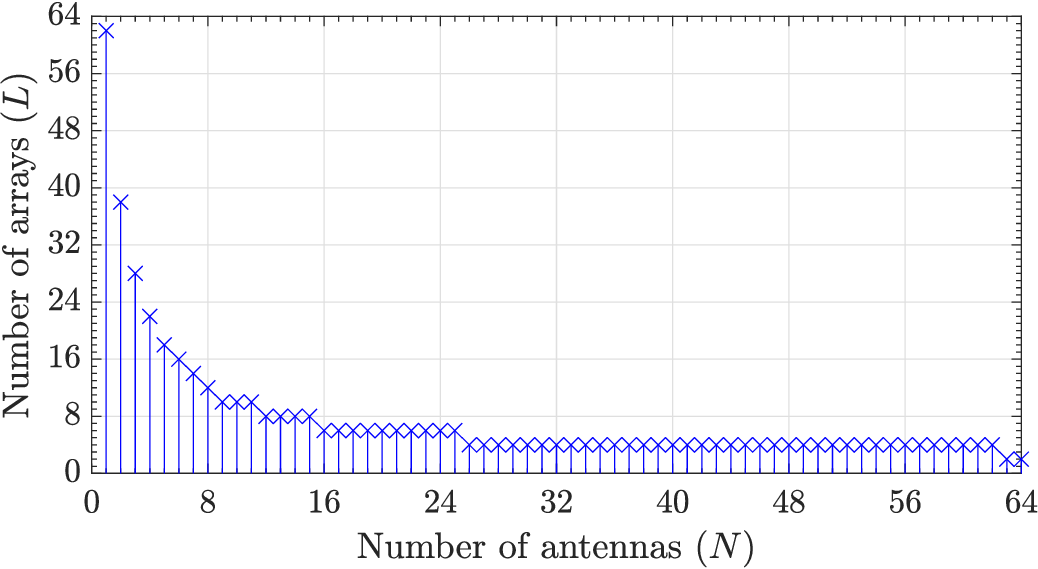}
    \caption{The required number of ULAs for achieving the beamfocusing effect with respect to the number of antennas per ULA. We consider $D_{\rm array} = 2$\,m and $F=30$\,m.}
    \label{F_NvsL}
\end{figure}

\subsection{Beamdepth Analysis}

We will now analyze the beamdepth by considering a potential transmitter at $(0, 0, z_t)$ while the MF is focused at the point $(0, 0, F)$.
We can derive the closed-form expression of the normalized array gain expression in \eqref{eq_G_MLA}, stated in the following corollary, which generalizes Theorem~\ref{Fresnel_Approx_twoULAs}.

\begin{corollary}\label{Fresnel_Approx_multiULAs}
Suppose that the transmitter is located at $(0,0,z_t)$ and that the MLA with $L$ ULAs uses MF focused on $(0, 0, F)$.
The  Fresnel approximation of the normalized array gain is
\begin{align}
\notag
& \hat{G}_{L,z_t} = 
\frac{1}{(L N a)^2}  \left( C^2(\sqrt{a}) + S^2(\sqrt{a}) \right)  \times \\ 
&\hspace{-2mm} \left( \left(\sum_{\substack{k=1 \\ k \text{ odd}}}^{L-1} \left(C(\beta_1^k) +C(\beta_2^k)\right) \right)^2 + \left(\sum_{\substack{k=1 \\ k \text{ odd}}}^{L-1} \left(S(\beta_1^k) +S(\beta_2^k)\right) \right)^2\right),
{\label{eq:Fresnel_approx_multi}}
\end{align}
 where  $\beta_1^k = \sqrt{a} N + \sqrt{\frac{2 }{\lambda z_{\rm eff}}} k \overline{\Delta}$,  $\beta_2^k = \sqrt{a} N - \sqrt{\frac{2 }{\lambda z_{\rm eff}}} k \overline{\Delta}$,  $a = \frac{\lambda}{8{z}_{\rm eff}}$, and $z_{\rm eff} = \frac{Fz_t}{|F-z_t|} $.
\end{corollary}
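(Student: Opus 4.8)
\emph{Proof proposal.} The plan is to mirror the derivation behind Theorem~\ref{Fresnel_Approx_twoULAs}, since Corollary~\ref{Fresnel_Approx_multiULAs} is exactly the $L$-fold version of the two-ULA beamdepth formula with $\overline{\Delta}$ replaced by $k\overline{\Delta}$ and then summed over the odd indices. I would start from the continuous-MF, Fresnel-approximated gain in \eqref{eq_G_MLA}, specialized to the beamdepth geometry $x_t=0$, $z=z_t$, and $w(x,y)=e^{+\imagunit\frac{2\pi}{\lambda}(\cdots/2F)}/\sqrt{LN}$. Multiplying the combining phase by the field phase, each ULA integrand collapses to a single quadratic phase $\pm\frac{\pi}{\lambda z_{\rm eff}}\big[(x\mp k\overline{\Delta})^2+y^2\big]$, where the curvature difference $\frac{1}{F}-\frac{1}{z_t}=\pm\frac{1}{z_{\rm eff}}$ defines $z_{\rm eff}=\frac{Fz_t}{|F-z_t|}$ exactly as in the $L=2$ case. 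Because this integrand separates, every term factorizes into an $x$- and a $y$-integral.

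First I would evaluate the common $y$-integral $\int_{-\delta/2}^{\delta/2}e^{\imagunit\frac{\pi}{\lambda z_{\rm eff}}y^2}\,dy$. With the substitution $t=y\sqrt{2/(\lambda z_{\rm eff})}$ and $\delta=\lambda/2$, its limits become $\pm\sqrt{a}$ with $a=\frac{\lambda}{8z_{\rm eff}}$, and the evenness of the integrand yields a factor $2\big(C(\sqrt{a})+\imagunit S(\sqrt{a})\big)$. Since this $y$-factor is identical for every ULA, it pulls outside the $k$-sum and, after taking the squared magnitude, produces precisely the $C^2(\sqrt{a})+S^2(\sqrt{a})$ prefactor in \eqref{eq:Fresnel_approx_multi}.

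Next I would treat the $x$-integral for a ULA centered at $+k\overline{\Delta}$, namely $\int_{-N\delta/2}^{N\delta/2}e^{\imagunit\frac{\pi}{\lambda z_{\rm eff}}(x-k\overline{\Delta})^2}\,dx$. The same substitution maps the limits to $\big[-\beta_1^k,\beta_2^k\big]$ with $\beta_1^k=\sqrt{a}N+\sqrt{2/(\lambda z_{\rm eff})}\,k\overline{\Delta}$ and $\beta_2^k=\sqrt{a}N-\sqrt{2/(\lambda z_{\rm eff})}\,k\overline{\Delta}$, and the oddness of $C,S$ turns this into $\big(C(\beta_1^k)+C(\beta_2^k)\big)+\imagunit\big(S(\beta_1^k)+S(\beta_2^k)\big)$. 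The one place where the $\pm k\overline{\Delta}$ pairing is used is the observation that the companion ULA centered at $-k\overline{\Delta}$ maps to $\big[-\beta_2^k,\beta_1^k\big]$ and, again by oddness, returns the identical value; hence each odd $k$ contributes this quantity twice. Summing over the $L/2$ odd indices $k\in\{1,3,\dots,L-1\}$ then gives $2\sum_k\big[(C(\beta_1^k)+C(\beta_2^k))+\imagunit(S(\beta_1^k)+S(\beta_2^k))\big]$.

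Finally I would multiply the common $y$-factor by the summed $x$-factor, take the squared magnitude (which splits into the squared real and imaginary Fresnel sums), and verify that the leading constant collapses. Combining the two factors of two (one from the symmetric $y$-limits, one from the $\pm k\overline{\Delta}$ pairing) with the substitution Jacobian $\lambda z_{\rm eff}/2$ yields an overall $16(\lambda z_{\rm eff}/2)^2$, which against the normalization $1/(LNA)^2$ with $A=\delta^2=\lambda^2/4$ reduces exactly to $1/(LNa)^2$ through $a=\lambda/(8z_{\rm eff})$, giving \eqref{eq:Fresnel_approx_multi}. I expect no essential difficulty, since the whole argument is the $L=2$ computation of Theorem~\ref{Fresnel_Approx_twoULAs} repeated per pair; the only care needed is purely bookkeeping, namely correctly orienting the integration limits after the shift (where the sign of $\frac{1}{F}-\frac{1}{z_t}$ and the oddness of $C,S$ conspire to make the result independent of whether $z_t<F$ or $z_t>F$) and confirming that the per-pair factor of two is absorbed into the $L^2$ normalization rather than appearing inside the squared sums.
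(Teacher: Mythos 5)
Your proposal is correct and follows essentially the same route as the paper, whose proof of Corollary~\ref{Fresnel_Approx_multiULAs} is simply to extend the two-ULA computation of Appendix~\ref{App_Fresnel_Approx_twoULAs}: factor out the common $y$-integral, substitute to map each $x$-integral over the ULA pair at $\pm k\overline{\Delta}$ onto the limits $[-\beta_1^k,\beta_2^k]$ and $[-\beta_2^k,\beta_1^k]$, invoke the oddness of $C(\cdot)$ and $S(\cdot)$ to merge each pair into $2\bigl(C(\beta_1^k)+C(\beta_2^k)+\imagunit(S(\beta_1^k)+S(\beta_2^k))\bigr)$, and absorb the resulting factors into the $1/(LNa)^2$ normalization. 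Your bookkeeping of the Jacobians and the two factors of two matches the paper's passage from the $1/(4Na)^2$ intermediate form to the final $1/(2Na)^2$ (here $1/(LNa)^2$), so no gaps remain.
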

\begin{proof}
    The proof follows by extending the summation in Appendix~\ref{App_Fresnel_Approx_twoULAs} to $L$ ULAs and proceeding with the same steps.
\end{proof}

\begin{figure}
    \centering
    \includegraphics[width=0.9\linewidth]{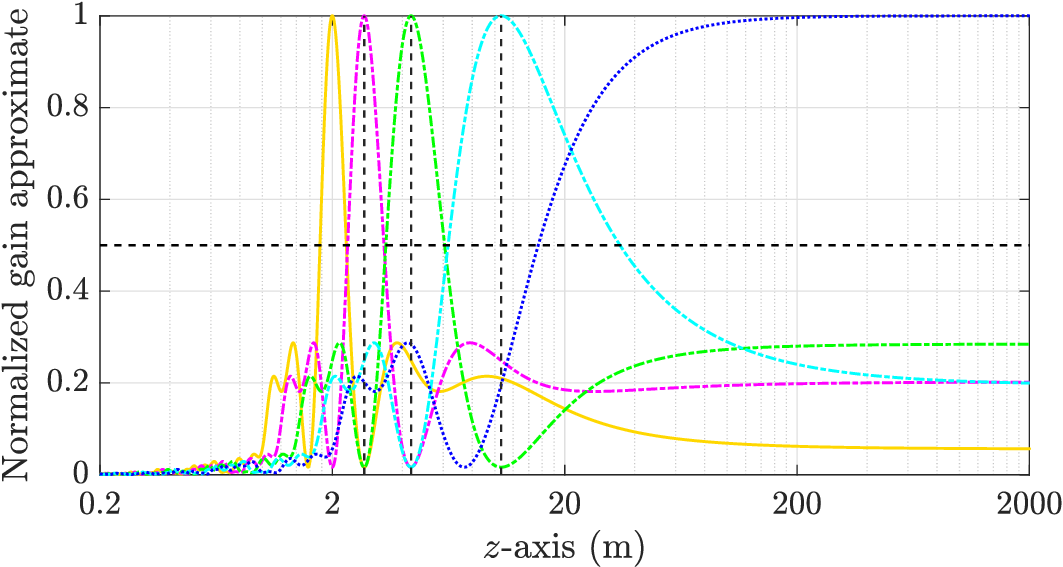}
    \caption{Near-field beamfocusing with a MLA with $L=4$ ULAs, each equipped with $N=16$ antennas. The total array aperture length is set to $D_{\rm array} = 1$\,m. Each color indicates a different focusing location, with the focus adjusted towards the null of the adjacent beam.}
    \label{F_multiplex}
\end{figure}

The finite-depth beamfocusing is a highly desirable feature in near-field propagation scenarios since it allows us to perform spatial multiplexing in the distance domain. To demonstrate this feature when using an MLA, we consider a multi-user scenario where the single-antenna users are located in the same angular direction but at different distances. 
Fig.~\ref{F_multiplex} shows the normalized array gain when focusing at different distances using different colors. We consider an MLA with $L=4$ ULAs, each of which has $N=16$ antennas.
First, we plot the yellow curve corresponding to the focal point $F = 2D_{\text{array}} = 2$\,m. Next, we plot the magenta curve, setting the focus at $F = 2.74$\,m (indicated by the dashed black vertical line), which is approximately at the first null of the yellow curve. Similarly, we plot the green and blue curves, setting their focal points to the first nulls of their respective preceding curves. Note that the closed-form expression in \eqref{eq:Fresnel_approx_multi} enables efficient computation of the curves in Fig.~\ref{F_multiplex} by selecting a particular focal point $F$ and then evaluating the function for a range of $z_t$ values in the distance domain.
\section{Localization and Channel Estimation}
\label{S_loc_ChEst}

The near-field beampattern depends on both the angle and distance, as demonstrated in the previous sections. This property can be utilized for user localization and parametric channel estimation, but at the expense of significantly increased computational complexity. This is because additional search across the distance domain is required in the near field, in contrast to the far field where only angular search is needed. In this section, we propose a low-complexity localization method for MLA-based systems that allows us to achieve a high positioning accuracy with a complexity that depends only on the angular parameter. The localization results are then used to estimate the channel between the MLA and the user, based on the near-field parametric channel model.

\subsection{User Localization}

Suppose that an MLA is used to localize a user in the $xz$-plane. In Fig.~\ref{F_Loc_illus}, we illustrate a user localization scenario with an MLA consisting of four ULAs, each of which is equipped with $N$ antennas. The aperture of an MLA is typically much larger than that of each constituent ULA. Consequently, the Fraunhofer distance of the MLA is significantly greater than that of the individual ULAs. Consider, for example, an MLA with a $2$\,m aperture operating at a $15$\,GHz carrier. Suppose it comprises four ULAs, each with $N=36$ half-wavelength–spaced elements. In this setup, the Fraunhofer distance of the MLA is $400$\,m, whereas that of each ULA is only $12.96$\,m. Therefore, the user can be assumed to be in the near-field of the MLA but in the far-field of each individual ULA. In this section, we present an efficient localization method specifically designed for this operating regime.

Suppose that the user transmits the (possibly random) signal $u[\tau]$ at time $\tau$. The received signal at ULA $\ell$ and time $\tau$ is 
\begin{equation}\label{eq:rec_signal_perULA}
    \vect{y}^{(\ell)}[\tau] = \sqrt{P \beta}\vect{a}\left(\varphi^{(\ell)}\right) u[\tau] + \vect{n}^{(\ell)}[\tau], \quad \tau = 1,\cdots,T,
\end{equation}
where $\vect{a}\left(\varphi^{(\ell)}\right) = \left[e^{\imagunit \frac{2\pi}{\lambda} \bar{x}_1^{(\ell)}  \cos\left(\varphi^{(\ell)}\right)}, \cdots,   e^{\imagunit \frac{2\pi}{\lambda} \bar{x}_N^{(\ell)}  \cos\left(\varphi^{(\ell)}\right)} \right]^{\Ttran}$ denotes the classical far-field array response vector\footnote{{The array response vector expression is valid only when the user is in the far-field of each ULA. If the user also lies in the near-field of the sub-arrays, a different modeling approach would be required.}}, with $\bar{x}_{n}^{(\ell)}$ defined in \eqref{eq:coord_multiULas}. Here,  $\beta = \left(\frac{\lambda}{4 \pi d}\right)^2$ is the large-scale channel gain obtained from the Friis free space path-loss, $d$ denotes the distance between the user and the center of the MLA,  $P$ denotes the transmit power, and $\vect{n}^{(\ell)}[\tau]\in \mathbb{C}^{N}  $ is the noise vector with each element following an independent circular-symmetric complex Gaussian distribution with variance $\sigma^2$.

\begin{figure}
    \centering
     \begin{overpic}[width=\linewidth]{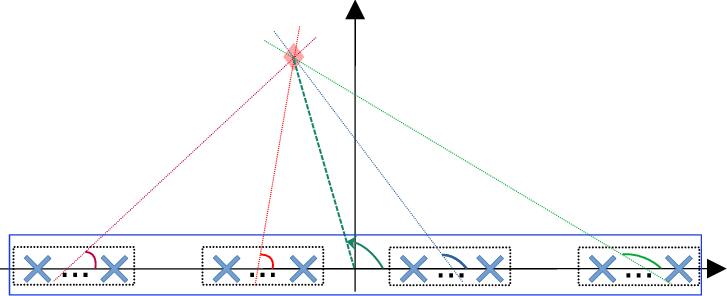}
     \put(37.5,35){\small user}
     \put(28,25){\small $d^{(1)}$}
     \put(13,6.5){\small $\varphi^{(1)}$}
     \put(55,25){\small $d^{(4)}$}
     \put(87,6.5){\small $\varphi^{(4)}$}
     \put(44,20){\small $d_t$}
     \put(50,6.5){\small $\varphi_t$}
     \put(2,10){\small ULA-$1$}
     \put(28,10){\small ULA-$2$}
     \put(53,10){\small ULA-$3$}
     \put(79,10){\small ULA-$4$}
     \put(42,-3){\small The MLA}
     \put(45,42){\scriptsize $z$-axis}
     \put(97,0){\scriptsize $x$-axis}
     \end{overpic}
    \vspace{0.7mm}
    \caption{Localization using MLA composed by $4$ ULAs.}
    \label{F_Loc_illus}
    \vspace{-3mm}
\end{figure}

We use the MUSIC algorithm\footnote{Although we consider a single-user scenario, extending it to a multi-user case is straightforward because the MUSIC algorithm inherently supports the detection of multiple users' directions.} to estimate the angle between each individual ULA and the user. The MUSIC algorithm works by exploiting the structure of the eigenvectors of the sample covariance matrix:
\begin{align}
\widehat{\vect{R}} = \frac{1}{T} \sum_{t=1}^T \vect{y}^{(\ell)}[\tau](\vect{y}^{(\ell)}[\tau])^{\Htran}\label{Rl}
\end{align}
over an interval of $T$ received signals. Notice that the received signal should not be a scalar, and therefore, an MLA with $N>1$ is required. Since we consider only a single user, we construct the noise-subspace matrix $\widehat{\vect{U}}_{\rm n}\in \mathbb{C}^{N \times (N-1)}$ whose columns are the eigenvectors of $\widehat{\vect{R}}$ corresponding to the smallest $(N-1)$ eigenvalues. The one-dimensional ($1$D)-MUSIC spectrum across the angular domain is computed as \cite{Stoica2005_book}:
\begin{align} \label{eq:MUSIC-spectrum}
S_M(\varphi)=\frac{1}{\vect{a}^{\Htran}(\varphi)\widehat{\vect{U}}_{\rm n}\widehat{\vect{U}}_{\rm n}^{\Htran}\vect{a}(\varphi)}.
\end{align}
The angle estimate of ULA $\ell$, $\hat{\varphi}^{(\ell)}$, is obtained by identifying the peak in the MUSIC spectrum. 

The angular estimates $\hat{\varphi}^{(1)},\cdots,\hat{\varphi}^{(L)}$ of all ULAs can then be combined to obtain the exact position of the user. We propose using the geometric intersection-based least squares estimator, which determines the user’s location by finding the least squares solution to a set of directional lines. These lines are derived from angle-of-arrival (AoA) measurements at known ULA positions:
\begin{equation}
    \begin{bmatrix}
\hat{x}_t \\
\hat{z}_t
\end{bmatrix}
=
\left(
\begin{bmatrix}
 \tan(\hat{\varphi}^{(1)}) & -1  \\
\vdots & \vdots \\
\tan(\hat{\varphi}^{(\ell)}) & -1  \\
\vdots & \vdots \\
 \tan(\hat{\varphi}^{(L)}) &  -1 
\end{bmatrix}
\right)^{\dagger}
\begin{bmatrix}
\bar{x}^{(1)}\tan(\hat{\varphi}^{(1)}) \\
\vdots \\
\bar{x}^{(\ell)}\tan(\hat{\varphi}^{(\ell)}) \\
\vdots \\
\bar{x}^{(L)}\tan(\hat{\varphi}^{(L)})
\end{bmatrix}, \label{eq:LS}
\end{equation}
where $^{\dagger}$ denotes the pseudoinverse, $\hat{x}_t$ and $\hat{z}_t$ denote the coordinate estimates of the user in the $xz$-plane, and $\bar{x}^{(\ell)}$ indicates the position of the center of the ULA $\ell$ along the $x$-axis. The distance and angle estimates between the user and the center of the MLA (i.e., the origin) can be obtained by using the Cartesian to polar coordinate transformation, denoted as $\hat{d}_t$ and $\hat{\varphi}_t$, respectively. Note that the least-squares approach in \eqref{eq:LS} may produce an ill-conditioned matrix, which can make the inversion unreliable. This corresponds to cases where the estimated directional lines are inaccurate, causing the intersection point to be highly imprecise or even impossible to determine. Such situations typically occur when noise dominates the received signal, i.e., at very low SNR. This is less likely in near-field communication settings. Nevertheless, a regularization term can be added to the least-squares formulation to avoid unstable solutions and prevent the estimate from becoming unacceptable.

Note that the proposed localization algorithm is based on an unweighted least-squares estimator  in \eqref{eq:LS}, implicitly assuming that all sub-arrays provide AoA estimates of identical reliability. In practice, however, heterogeneity among sub-arrays may arise due to differences in aperture size, SNR, or hardware impairments. A natural extension of our method would therefore be to adopt a weighted least-squares formulation, where the weights are selected according to the estimation variance or SNR of each sub-array.

\begin{figure}
\centering
\subfloat[The number of arrays is $L=4$.]
{\includegraphics[width=0.9\linewidth]{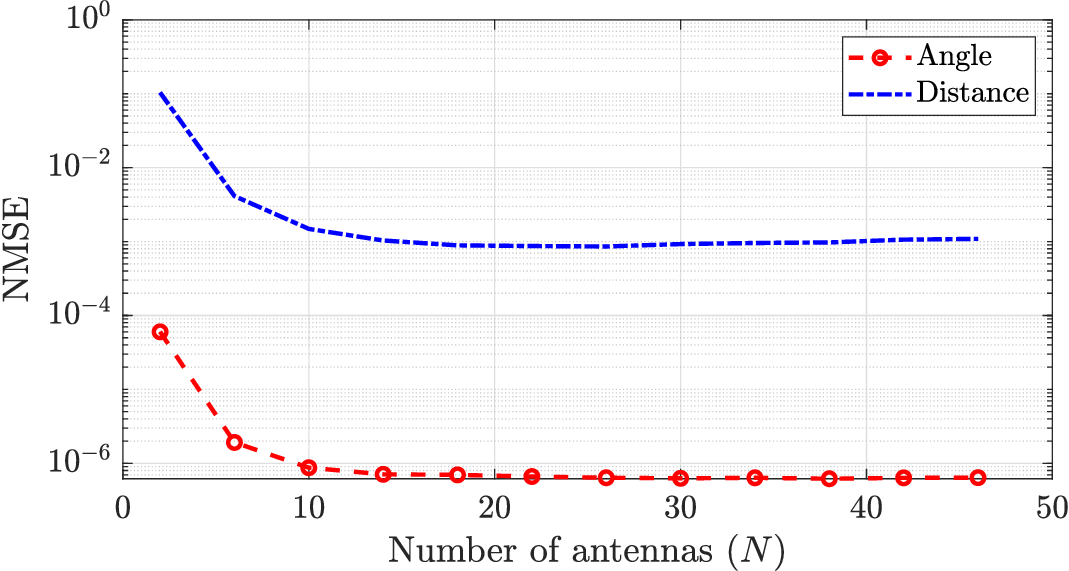}}\hfill
\centering
\subfloat[The number of antennas is $N=16$.]
{\includegraphics[width=0.9\linewidth]{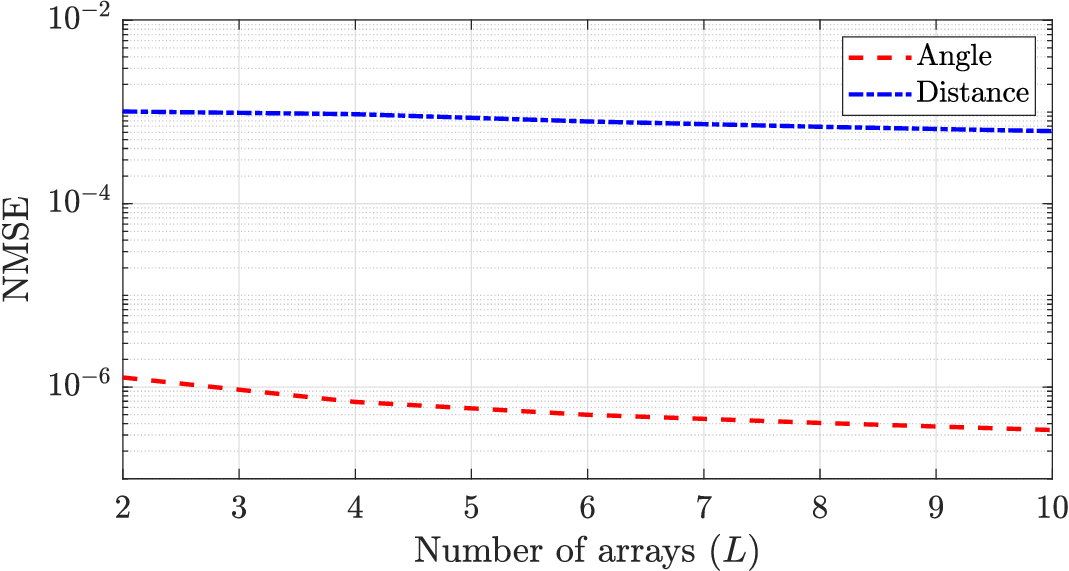}}
\caption{Performance evaluation of the proposed localization method based on varying numbers of antennas and multiple arrays. }
\label{F_locs}
\end{figure}

We will evaluate the performance of the proposed localization approach in terms of the normalized mean square error (NMSE) defined as 
\begin{equation}
    {\rm NMSE }= \frac{ \mathbb{E} \{ \|\hat{\vect{x}} - \vect{x} \|^2 \} }{ \mathbb{E}\{\|  \vect{x}\|^2\}},
\end{equation}
where $\hat{\vect{x}}$ is the estimate of $\vect{x}$.

In Fig.~\ref{F_locs}, we set the aperture length of the MLA to $2$\,m, the number of samples is $T=100$, the transmit power is $P = 20$\,dBm, and the bandwidth is $400$\,MHz. The noise power is $\sigma^2 = -78$\,dBm, including a noise figure of $10$\,dB. We consider a BS array located at the origin, with a user randomly positioned at an angle $\varphi_t \sim \mathcal{U}(-60^\circ, 60^\circ)$ and a distance $d_t \sim \mathcal{U}(4\,\text{m}, 40\,\text{m})$.
 The angular domain is discretized with a resolution of $0.002$ radians.

As depicted in Fig.~\ref{F_locs}(a), NMSE performance improves when a larger number of antennas is considered.  NMSE saturates for $N> 20$. The saturation is due to the limited grid resolutions. If the resolution is increased, the NMSE performances for both the angle and distance will further improve. Note that if we consider $N=1$, resembling an LSA, then this localization method cannot be performed, since no sample covariance matrix exists. At least $N=2$ antennas are required for angle estimation. There is a trade-off between complexity and performance, as adding more antennas or increasing the grid resolutions improves the localization performance at the expense of higher computational complexity. 
In Fig.~\ref{F_locs}(b), the number of antennas in each ULA is fixed at
$N=16$, while the number of arrays $L$ is varied. 
The localization performance improves as $L$ increases, but saturates relatively quickly since the total aperture length is fixed.

\subsection{Channel Estimation}

The localization results can be used to estimate the user's channel. Let us now consider the complete received signal at the  MLA:
\begin{equation}
    \bar{\vect{y}}[\tau] = \sqrt{P \beta}\vect{b}(\varphi_t,d_t) u[\tau] + \vect{n}[\tau],
\end{equation}
where the near-field array response vector is given as
\begin{align} \nonumber
&\vect{b}(\varphi_t,d_t)= \\
&\big[ e^{-\imagunit \frac{2\pi}{\lambda} d_1^{(1)}},\cdots, e^{-\imagunit \frac{2\pi}{\lambda} d_{N}^{(1)}},\cdots, e^{-\imagunit \frac{2\pi}{\lambda} d_1^{(L)}},\cdots, e^{-\imagunit \frac{2\pi}{\lambda} d_N^{{(L)}}}  \big]^{\Ttran}, \label{eq:near-field-bvector}
\end{align} 
and $d_n^{(\ell)}  =  \sqrt{d_t^2 + {\left(\bar{x}_n^{(\ell)}\right)^2} - {2 \bar{x}_n^{(\ell)}{d_t}} \cos(\varphi_t)}  $ denotes the distance between the user and antenna $n$ in ULA $\ell$.

As we have already obtained an estimation of the user location parameters $(\hat{\varphi}_t,\hat{d}_t)$ in the previous subsection, we can estimate the response vector of the array as $\vect{\hat{b}}(\hat{\varphi}_t,\hat{d}_t) = [\hat{b}_1,\dots,\hat{b}_{NL}]^{\Ttran}$ and the large-scale channel gain $\beta = \lambda^2/\left(4 \pi d_t \right)^2$. Therefore, we are now able to recover the channel between the MLA and the user.

\begin{figure}
    \centering
    \includegraphics[width=0.9\linewidth]{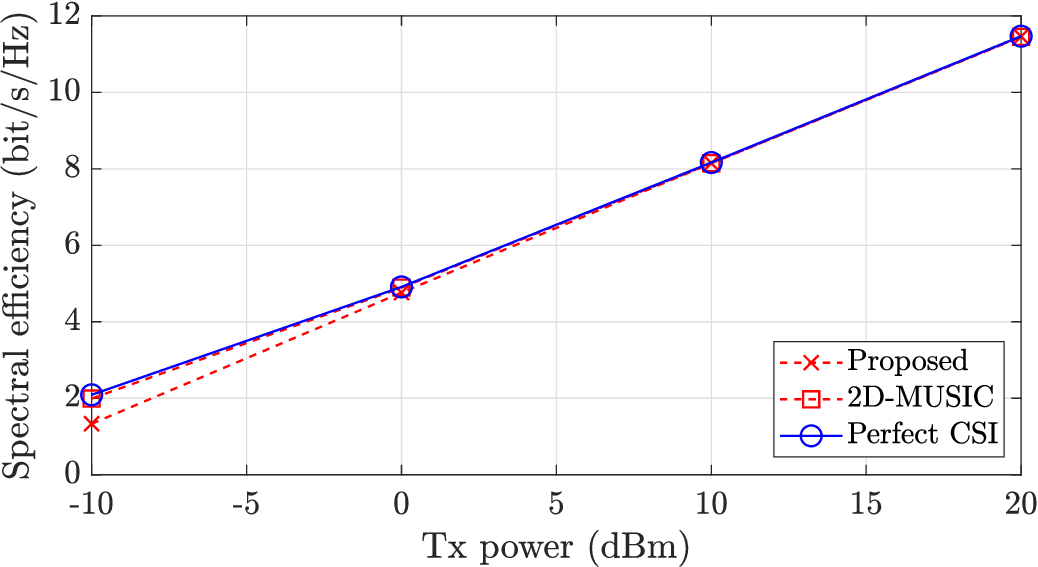}
    \caption{Comparison of the proposed method and $2$D-MUSIC. We set $L=4$, $N=16$, and $D_{\rm array} = 2$\,m.}
    \label{F_ChEstim}
\end{figure}

For a baseline comparison, we consider the $2$D-MUSIC-based approach. {The baseline 2D-MUSIC uses centralized joint processing: all sub-array snapshots are stacked into a single covariance matrix and the MLA steering vector is used, treating the MLA as one array. The user location is then obtained from the peak of the 2D-MUSIC spectrum \cite{2024_Ramezani_Arxiv}.} The two dimensions refer to the angular and distance domains.
The 2D-MUSIC spectrum is 
\begin{equation}
    S_{2M}(\varphi,d) = \frac{1}{\vect{b}^{\Htran}(\varphi,d) \widehat{\vect{U}}_{\rm n}\widehat{\vect{U}}_{\rm n}^{\Htran}\vect{b}(\varphi,d)}. 
\end{equation}
Distance and angle estimates are obtained by identifying the peak within the MUSIC spectrum. These estimates can then be used to determine the channel between the user and the MLA, by substituting the estimated user's location parameters into the channel model, similarly to the proposed approach. 

Channel estimation ultimately affects the achievable SE. For an MF-based combiner, the uplink SE can be computed as $  \text{SE} = \log_2\left(1+\frac{P\beta}{\sigma^2} \left|\frac{\hat{\vect{h}}^{\Htran}}{\|\hat{\vect{h}}\|}  \vect{h}\right|^2\right)$\footnote{The SE expression assumes that the combiner is based on an imperfect channel estimate, but the effective channel $\hat{\vect{h}}^{\Htran} \vect{h} /\|\hat{\vect{h}}\|$ is known during the demodulation (e.g., thanks to a demodulation pilot).}, where $\hat{\vect{h}} = \vect{\hat{b}}(\hat{\varphi},\hat{d})$. In Fig.~\ref{F_ChEstim}, we evaluate the SE for the proposed approach by comparing it with the channel estimation obtained based on the $2$D-MUSIC approach. The system setup is identical to that in Fig.~\ref{F_locs}.
For $2$D-MUSIC, the angular domain is discretized with the same resolution as in the proposed approach, $0.002$ radians, while the distance domain is discretized with a resolution of $0.02$ meters.
When the transmit power is sufficiently high (e.g., above $0$ dBm), the SE achieved by the proposed method closely matches that of the $2$D-MUSIC. We observe a slight gap between the performance of the proposed method and $2$D-MUSIC, in the low SNR regime. This can be viewed as a trade-off between performance and complexity, introduced by sub-array-based processing. Nevertheless, the SE performance of the proposed method is close to that of the ideal scheme (i.e., perfect channel state information),  while the reduction in computational complexity compared to the 2D-MUSIC baseline is substantial, approximately three orders of magnitude. In conclusion, the proposed approach achieves near-optimal performance with significantly reduced complexity.

\begin{remark}
    Subspace-based AoA estimation can be fragile in low-SNR regimes and computationally demanding if sub-arrays have many antennas. Beam training offers a practical alternative for modular arrays where multi-beam training can be implemented by
activating one antenna in each module \cite{zhou2024multi}.
\end{remark}

\section{Conclusions}
\label{Sect_conclude}
This study provided a comprehensive analysis of near-field beamfocusing using an MLA. We highlighted its ability to perform efficient beamfocusing with significantly fewer antennas than with an equally long half-wavelength-spaced ULA and without the ambiguities (e.g., grating lobes) that conventional sparse ULAs suffer from. Specifically, we characterized the beamfocusing behavior both analytically and through simulations, considering factors such as the number of ULAs, the number of antennas per ULA, and the total aperture length of the MLA. The analytical expressions derived for the beamwidth and beamdepth offered valuable insights into the beam characteristics of MLAs composed of two ULAs as well as multiple ULAs. The proposed MLA architecture constitutes a novel BS deployment strategy for 6G networks, where the telecom operator can achieve beamfocusing features using a few normal-sized arrays positioned several meters apart on the same rooftop. This approach can be seen as an initial step toward realizing the futuristic cell-free mMIMO architecture, where the arrays are envisioned to be deployed distributively over the coverage area. Even if the user is in the far-field of the individual ULAs, the MLA combines the ULAs to obtain the depth perception that enables beamfocusing.
This effect resembles how humans perceive depth using two eyes. We demonstrated how an MLA can provide accurate localization both in angle and distance. We proposed a localization method focused on angular estimation in each ULA to significantly reduce computational complexity compared to conventional methods. In general, the results underscored the potential of the MLA architecture to improve the efficiency of future BS deployments.
\appendices

\section{Proof of Lemma~\ref{Fresnel_Approx_ULA}}
\label{App_Fresnel_Approx}
From \eqref{eq:MF_near}, we have
\begin{align}
\hat{G}_{{\rm ULA}}  &=\frac{1}{(NA)^2} \times \nonumber\\
&\left\vert \sum_{n=1}^N \int_{\mathcal{A}_n} e^{+\imagunit\frac{2\pi}{\lambda}\left(\frac{x^2}{2F}+\frac{y^2}{2F}\right)}e^{-\imagunit\frac{2\pi}{\lambda}\left(\frac{x^2}{2z}+\frac{y^2}{2z}\right)}   dx dy\right \vert^2  
\nonumber\\
&=\frac{1}{{{(NA)}^{2}}} \times \notag
\\ \notag
&  \left| \int\limits_{-\frac{N\delta}{2} }^{\frac{N\delta}{2} }\int\limits_{-\frac{\delta}{2}}^{\frac{\delta}{2}} 
e^{\imagunit \frac{2\pi}{\lambda} \left(\frac{x^2}{2F} + \frac{y^2}{2F}\right)}  {e}^{-\imagunit\frac{2\pi }{\lambda }\left( \frac{{{x}^{2}}}{2z}+\frac{{{y}^{2}}}{2z} \right)}  dydx \right|^{2} \\ 
&    =\frac{1}{{{(NA)}^{2}}}{{\left| \int\limits_{-\frac{N\delta}{2} }^{\frac{N\delta}{2} }\int\limits_{-\frac{\delta}{2}}^{\frac{\delta}{2}}   \!\!\! {{{e}^{-\imagunit\frac{2\pi }{\lambda } \frac{\left( |F-z| \right)\left( {{x}^{2}}+{{y}^{2}} \right)}{2zF} }}} dydx \right|}^{2}}.
\end{align}
By defining $z_{\rm eff} = \frac{Fz}{|F-z|}$, we can rewrite the expression as
\begin{align}\label{G-derivation}
    \hat{G}_{\rm ULA} =\frac{1}{(NA)^{2}}{{\left| \int\limits_{-\frac{N\delta}{2} }^{\frac{N\delta}{2} }\int\limits_{-\frac{\delta}{2}}^{\frac{\delta}{2}} 
    {e}^{-\imagunit\frac{\pi }{\lambda } \frac{ {{x}^{2}}}{z_{\rm eff}}  }  {e}^{-\imagunit\frac{\pi }{\lambda } \frac{ {{y}^{2}}}{z_{\rm eff}}  }dydx \right|}^{2}}.
\end{align}
The evaluation of the anti-derivatives in \eqref{G-derivation} yields  \cite{1956_Polk_TAP}
\begin{equation}
\hat{G}_{\rm ULA} =
\frac{\left( {{C}^{2}}\left( \sqrt{ a } \right)+{{S}^{2}}\left( \sqrt{ a } \right) \right) \left( {{C}^{2}}\left( \sqrt{ a } N \right)+{{S}^{2}}\left( \sqrt{ a } N \right) \right)}{( N a )^{2}},
\end{equation}
where $C\left(\cdot \right)$ and $S\left(\cdot \right)$ are the Fresnel integrals, and $ a = \frac{\lambda}{8{z}_{\rm eff}}$. This completes the proof.

\section{Proof of Theorem~\ref{Fresnel_Approx_twoULAs}}
\label{App_Fresnel_Approx_twoULAs}

From \eqref{eq:G2app}, we have
\begin{align}
\notag
& \hat{G}_{2,z_t} = \frac{1}{{{(2  N \delta^2)}^{2}}}  \\ \notag
& \Bigg|   
\int\limits_{-\frac{N \delta}{2} }^{\frac{N\delta}{2} }
\int\limits_{-\frac{ \delta}{2}}^{\frac{ \delta}{2}} 
e^{\imagunit \frac{2\pi}{\lambda} \left(\frac{(x-\overline{\Delta})^2}{2F} + \frac{y^2}{2F}\right)}
{e}^{-\imagunit\frac{2\pi }{\lambda }\left(\frac{{{(x-\overline{\Delta} )}^{2}}}{2z_t}+\frac{{{y}^{2}}}{2z_t} \right)} dxdy  
\\ \notag
& +  \int\limits_{-\frac{N\delta}{2} }^{\frac{N\delta}{2} }\int\limits_{-\frac{ \delta}{2}}^{\frac{ \delta}{2}} 
e^{\imagunit \frac{2\pi}{\lambda} \left(\frac{(x+\overline{\Delta})^2}{2F} + \frac{y^2}{2F}\right)}
{e}^{-\imagunit\frac{2\pi }{\lambda }\left(\frac{{(x + \overline{\Delta} )^{2}}}{2z_t}+\frac{{{y}^{2}}}{2z_t} \right)}  dxdy \Bigg|^2
\\ 
&= \frac{ \Bigg| \int\limits_{-\frac{ \delta}{2}}^{\frac{ \delta}{2}} 
{e}^{\imagunit\frac{\pi }{\lambda }\frac{{{y}^{2}}}{z_{\rm eff}} } dy \left( \int\limits_{-\frac{N\delta}{2}}^{\frac{N\delta}{2} }
e^{\imagunit \frac{\pi}{\lambda} \frac{(x-\overline{\Delta})^2}{ z_{\rm eff}} } dx +
\int\limits_{-\frac{N\delta}{2} }^{\frac{N\delta}{2} }
e^{\imagunit \frac{\pi}{\lambda} \frac{(x+\overline{\Delta})^2}{ z_{\rm eff}} }
 dx \right) \Bigg|^2}{{{(2  N \delta^2)}^{2}}} . \label{G2a}
\end{align} 
The evaluation of the anti-derivatives in \eqref{G2a} using the Fresnel integrals yields
\begin{align}
\notag
\hat{G}_{2,z_t} = &\frac{1}{{{(4  N a)}^{2}}}  
\left( C^2(\sqrt{a}) + S^2(\sqrt{a}) \right)  
\\ \notag
& \cdot\Big( \left( C(\beta_1) - C(-\beta_1) +C(\beta_2) - C(-\beta_2) \right)^2 + \\ 
& \left( S(\beta_1) -S(-\beta_1) + S(\beta_2) -S(-\beta_2) \right)^2\Big),
 \label{G2b}
\end{align} 
where  $a$, $\beta_1$, and $\beta_2$ are defined in the theorem. Since the Fresnel integral functions $C(\cdot)$ and $S(\cdot)$ are both odd functions, we can simplify the expression as
\begin{multline}
 \hat{G}_{2,z_t} = \frac{1}{{{(2  N a)}^{2}}}  
 \left( C^2(\sqrt{a}) + S^2(\sqrt{a}) \right)\cdot \\ 
\Big( \left( C(\beta_1) +C(\beta_2) \right)^2 + \left( S(\beta_1)  + S(\beta_2)  \right)^2\Big),
\end{multline} 
which completes the proof.

\bibliographystyle{IEEEtran}
\newpage

\bibliography{IEEEabrv,refs}
\begin{IEEEbiography}
[{\includegraphics[width=1in,height=1.25in,clip,keepaspectratio]
{{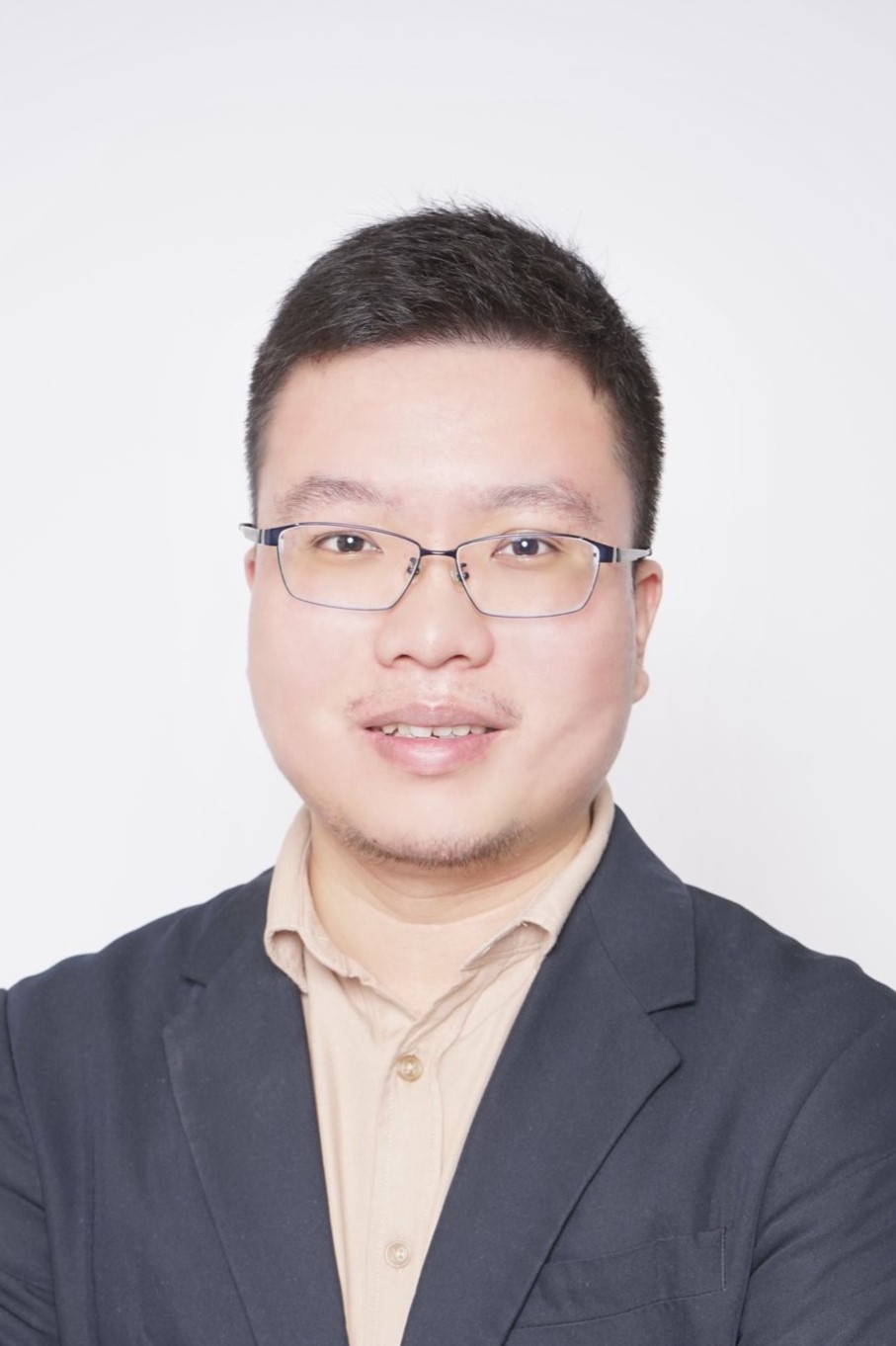}}}]{Alva Kosasih}
received the B.Eng. and M.Eng. degrees in electrical engineering from Brawijaya University, Indonesia, in 2013 and 2017, respectively, the M.S. degree in communication engineering from National Sun Yat-sen University, Taiwan, in 2017, and the Ph.D. degree in communication engineering from the University of Sydney, Australia, in 2023. From 2023 to 2024, he was a Postdoctoral Researcher with the KTH Royal Institute of Technology, Stockholm, Sweden. He is currently a 3GPP RAN1 Delegate with Nokia Standards. His research interests include signal processing for large-scale MIMO; near-field beamfocusing, localization, and channel estimation; MIMO symbol detection; and machine learning for physical-layer communications
\end{IEEEbiography}

\begin{IEEEbiography}
[{\includegraphics[width=1in,height=1.25in,clip,keepaspectratio]
{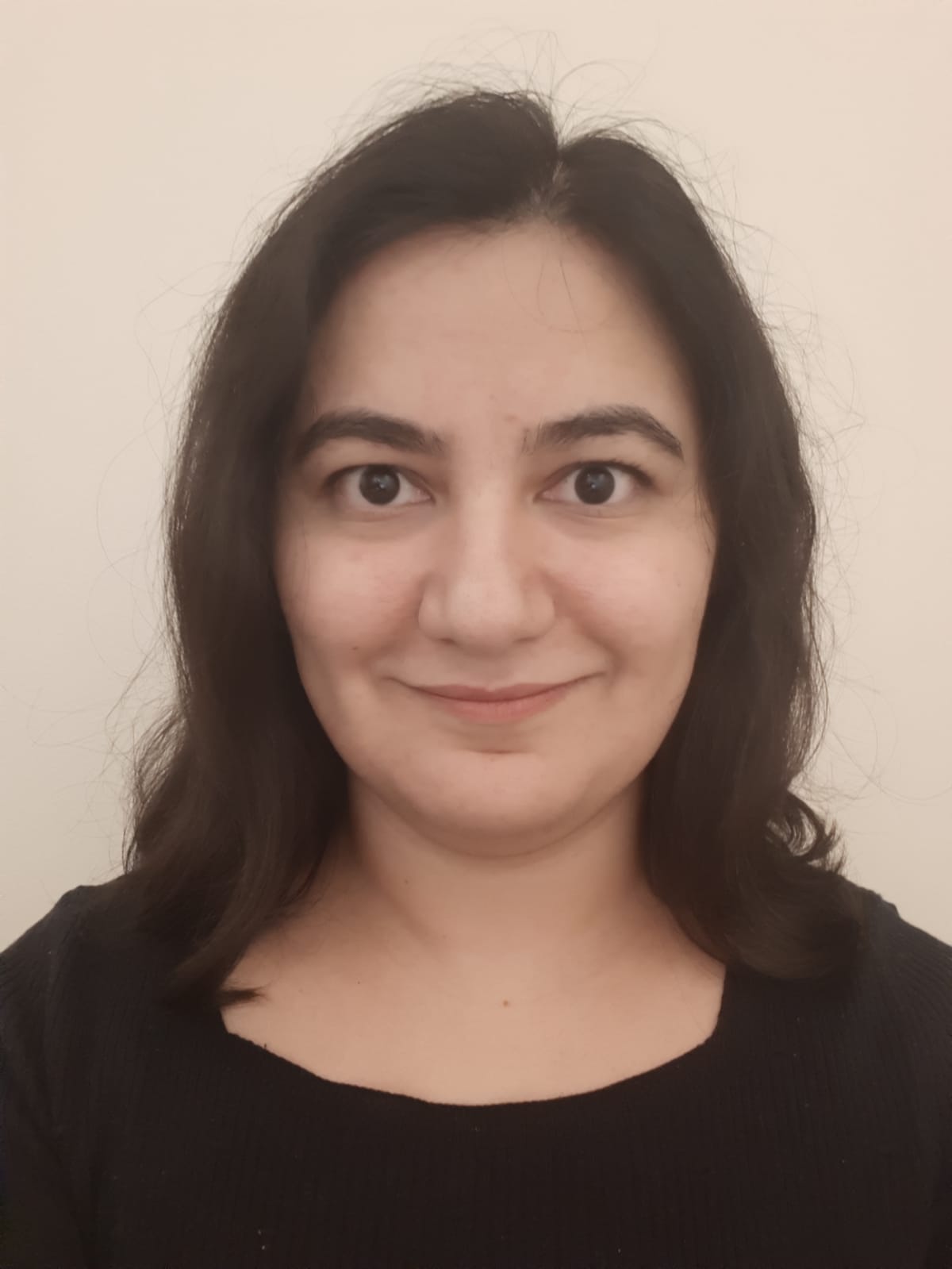}}]{\"Ozlem Tu\u{g}fe Demir }
is an Assistant Professor of Electrical and Electronics Engineering at Bilkent University, Ankara, Turkiye. She was with TOBB University of Economics and Technology, Ankara, Turkiye. She received the B.S., M.S., and Ph.D. degrees in Electrical and Electronics Engineering from Middle East Technical University, Ankara, Turkey, in 2012, 2014, and 2018, respectively. She was a Postdoctoral Researcher at Linköping University, Sweden in 2019-2020 and at KTH Royal Institute of Technology, Sweden in 2021-2022. She has co-authored the textbooks Foundations of User-Centric Cell-Free Massive MIMO (2021) and Introduction to Multiple Antenna Communications and Reconfigurable Surfaces (2024). Her research interests focus on signal processing and optimization in wireless communications, massive MIMO, cell-free massive MIMO, beyond 5G multiple antenna technologies, reconfigurable intelligent surfaces, near-field communications, and green mobile networks.
\end{IEEEbiography}

\begin{IEEEbiography}
[{\includegraphics[width=1in,height=1.25in,clip,keepaspectratio] {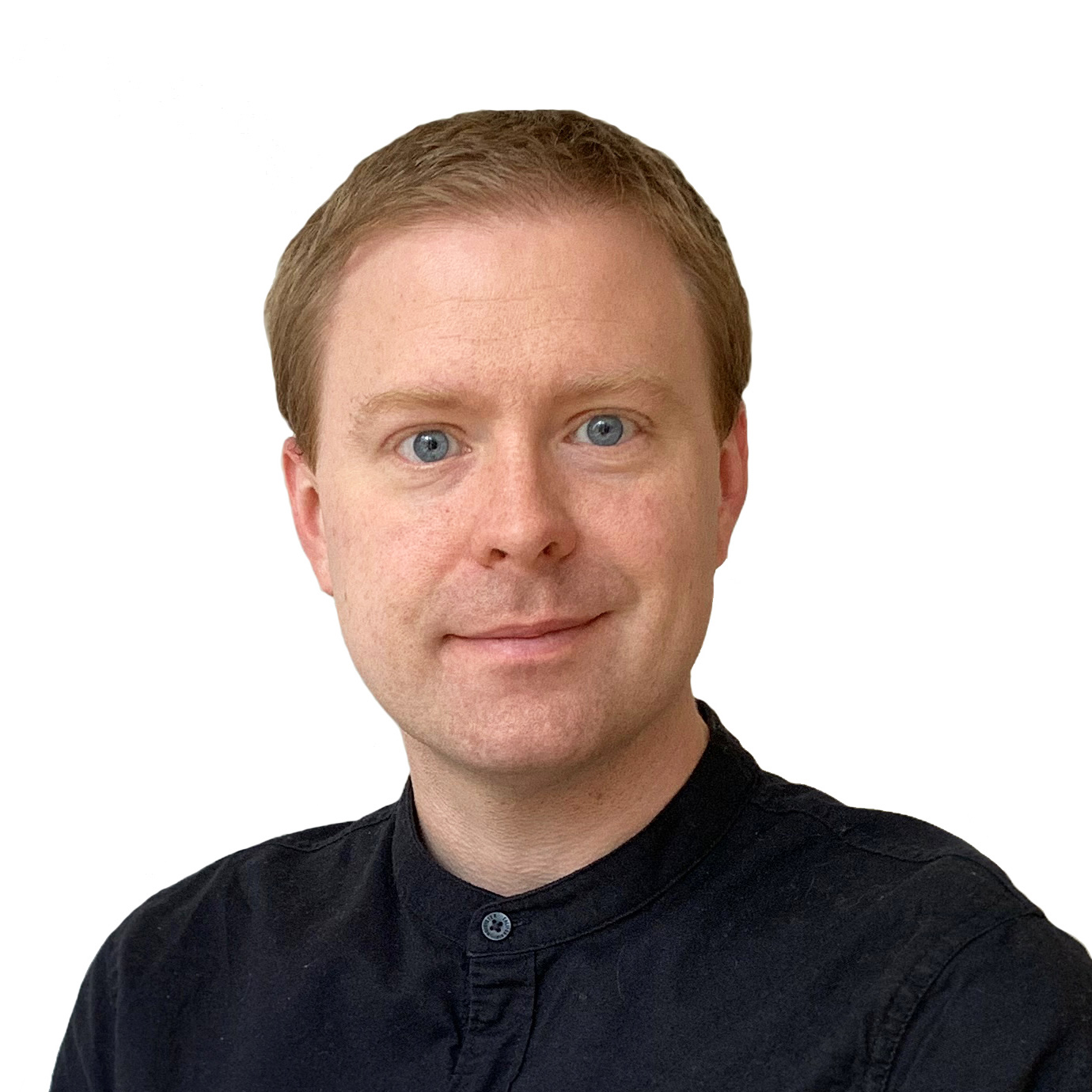}}]{Emil Bj\"ornson }
received the M.S. degree in engineering mathematics from Lund University, Sweden, in 2007, and the Ph.D. degree in telecommunications from the KTH Royal Institute of Technology, Sweden, in 2011.

From 2012 to 2014, he was a Post-Doctoral Researcher with the Alcatel-Lucent Chair on Flexible Radio, SUPELEC, France. From 2014 to 2021, he held different professor positions at Link\"oping University, Sweden. He has been a Full Professor of Wireless Communication at KTH since 2020 and the Head of the Communication Systems division since 2024. He has authored the textbooks \emph{Optimal Resource Allocation in Coordinated Multi-Cell Systems} (2013), \emph{Massive MIMO Networks: Spectral, Energy, and Hardware Efficiency} (2017), \emph{Foundations of User-Centric Cell-Free Massive MIMO} (2021), and \emph{Introduction to Multiple Antenna Communications and Reconfigurable Surfaces} (2024). He is dedicated to reproducible research and has published much simulation code. He researches multi-antenna communications, reconfigurable intelligent surfaces, radio resource allocation, machine learning for communications, and energy efficiency.

Dr. Bj\"ornson has performed MIMO research since 2006. His papers have received more than 40000 citations, he has filed more than 30 patent applications, and he is recognized as a Clarivate Highly Cited Researcher. He co-hosts the podcast Wireless Future and has a popular YouTube channel with the same name. He is a Wallenberg Academy Fellow, a Digital Futures Fellow, and an SSF Future Research Leader. He has received the 2014 Outstanding Young Researcher Award from IEEE ComSoc EMEA, the 2015 Ingvar Carlsson Award, the 2016 Best Ph.D. Award from EURASIP, the 2018 and 2022 IEEE Marconi Prize Paper Awards in Wireless Communications, the 2019 EURASIP Early Career Award, the 2019 IEEE ComSoc Fred W. Ellersick Prize, the 2019 IEEE Signal Processing Magazine Best Column Award, the 2020 Pierre-Simon Laplace Early Career Technical Achievement Award, the 2020 CTTC Early Achievement Award, the 2021 IEEE ComSoc RCC Early Achievement Award, the 2023 IEEE ComSoc Outstanding Paper Award, and the 2024 IEEE ComSoc Stephen O. Rice Prize. He also coauthored papers that received best paper awards at five conferences.

\end{IEEEbiography}
\end{document}